\documentclass[12pt]{article}
\usepackage{amsmath,amsfonts,amsthm,amssymb}
\usepackage{url}
\usepackage{graphicx}
\usepackage[font=small]{caption}
\usepackage{subcaption}
\usepackage[colorlinks=true]{hyperref}
\hypersetup{
    linkcolor = {blue},
    citecolor = {blue}
}
\usepackage{cleveref}
\usepackage{enumitem}
\usepackage{algpseudocode}
\usepackage{algorithmicx}
\usepackage{physics}
\usepackage{bbm}
\usepackage{array}
\usepackage{xfrac}
\usepackage{tikz}
\usepackage{float}
\usepackage{enumitem}
\usepackage{color}
\usepackage{mathtools}
\usepackage{comment}
\usepackage[numbers]{natbib}
\usepackage[letterpaper, top=1in, bottom=1in, right=1in, left=1in]{geometry}
\usepackage{thm-restate}
\usepackage{makecell}

\usepackage{csquotes}
\usepackage[dvipsnames]{xcolor}
\usepackage[suppress]{color-edits}  % use this to suppress the package
\addauthor[Bob]{ret}{blue}    % ret for Bob
\addauthor[Xiaoyang]{xxy}{brown!80!black}
\addauthor[Lazy]{lazy}{green!60!black}

\graphicspath{{images/}}

\newcommand{\be}[0]{\begin{enumerate}}
\newcommand{\ee}[0]{\end{enumerate}}
\newcommand{\bi}[0]{\begin{itemize}}
\newcommand{\ei}[0]{\end{itemize}}

\newcommand{\OO}{\mathrm{O}}

\newcommand{\Decreasekey}{\mbox{\it decrease-key}}
\newcommand{\Delete}{\mbox{\it delete}}

\newcommand{\Deletemin}{\mbox{\it delete-min}}

\newcommand{\Findmin}{\mbox{\it find-min}}

\newcommand{\Insert}{\mbox{\it insert}}

\newcommand{\Makeheap}{\mbox{\it make-heap}}
\newcommand{\Meld}{\mbox{\it meld}}

\newtheorem{theorem}{Theorem}[section]

\newtheorem{lemma}[theorem]{Lemma}
\newtheorem{corollary}[theorem]{Corollary}

\begin{document}
\title{Pure Pairing Heaps}
\author{Robert E. Tarjan\thanks{Department of Computer Science, Princeton University.  Research partially supported by a gift from Microsoft.} \and Xiaoyang Xu\thanks{Department of Computer Science, University of California, Santa Barbara}}
%\author{Anonymous Author(s)}

\date{}

\maketitle

\begin{abstract}
The \emph{pairing heap}~\cite{FSST86} is a ``self-adjusting'' implementation of a heap (priority queue) that is widely used in practice because it is simple and efficient.  We introduce and analyze a simplified version of the pairing heap that we call the \emph{pure pairing heap}.  Our innovation is to eliminate the assembly pass during delete-min operations.  We obtain the following amortized time bounds for operations on pure pairing heaps: $\OO(\log n)$ time per delete-min, $\OO(\log\log n \cdot \log\log\log n)$ time per decrease-key operation, and $\OO(1)$ time for each insert or meld.  These bounds match those recently obtained for a more complicated version of pairing heaps, the \emph{multipass pairing heap}.  These bounds also match the known lower bounds for self-adjusting heaps, except for the decrease-key bound, which is within a factor of $\log\log\log n$ of the lower bound.  The main novelty in our analysis is to partition heap items into groups and to analyze each group separately.  Our analysis extends to give the same bounds for lazy pairing heaps, a multitree version of pairing heaps.    
\end{abstract}

\section{Introduction}\label{S:introduction}

A \emph{heap} (or \emph{priority queue}) is a data structure containing a set of items, each with an associated key selected from a totally ordered key space, that supports the following operations:

\begin{itemize}
\item[] $\Makeheap()$: Create and return a new, empty heap.

\item[] $\Findmin(H)$: Return an item of smallest key in heap $H$; return null if $H$ is empty.

\item[] $\Insert(H, v)$: Insert item $v$ with predefined key into heap $H$.  Item $v$ must be in no other heap.

\item[] $\Deletemin(H)$: Delete $\Findmin(H)$ from $H$.

\item[] $\Meld(H_1, H_2)$: Return a heap containing all items in item-disjoint heaps $H_1$ and $H_2$, destroying $H_1$ and $H_2$ in the process.

\item[] $\Decreasekey(H, v, k)$: Given the location of item $v$ in heap $H$, and given that the key of $v$ is at least $k$, decrease the key of $v$ in $H$ to $k$. 

\item[] $\Delete(H, v)$: Given the location of item $v$ in heap $H$, delete $v$ from $H$. 
\end{itemize}

One can implement $\Delete(H, v)$ as $\Decreasekey(H, v, -\infty)$ followed by $\Deletemin(H)$, where $-\infty$ is a value smaller than those of all actual keys.  We shall assume this implementation and not further discuss $\Delete$.  

If the only operations on keys are comparisons, $n$ insertions followed by $n$ delete-min operations will sort $n$ items by key, so the amortized time\footnote{We shall study \emph{amortized time} throughout. See~\cite{tarjan1985amortized} for a discussion of amortized analysis.} of either insertion or delete-min or both must be $\Omega(\log n)$ on an $n$-item heap.  The \emph{Fibonacci heap}~\cite{Fibonacci}  supports delete-min in $\OO(\log n)$ amortized time and all the other heap operations in $\OO(1)$ amortized time.  Since the invention of the Fibonacci heap, many other heaps with similar efficiency have been developed. See~\cite{BrodalSurvey, HollowHeaps, KS19}.

All these data structures obtain their efficiency by maintaining a balance condition.  An alternative is to design a ``self-adjusting'' data structure, one that does not maintain an explicit balance condition but instead obtains its efficiency by improving the structure enough during slow operations to make slow operations rare.  One such self-adjusting heap is the \emph{pairing heap}~\cite{FSST86}.  Pairing heaps are simple and efficient in practice~\cite{LarkinSenTarjan, StaskoVitter}.  The paper that introduced pairing heaps proved an $\OO(\log n)$ amortized time bound for all operations.  The paper conjectured that pairing heaps have the same amortized efficiency as Fibonacci heaps ($\OO(1)$ time for each heap operation other than delete-min), but this was disproved by Fredman~\cite{FredmanLB}, who showed that pairing heaps and similar self-adjusting heaps must take $\Omega(\log\log n)$ time per decrease-key if they take $\OO(\log n)$ time per delete-min.  Iacono and {\"O}zkan~\cite{IaconoOzkan} proved the same lower bound for a different large class of self-adjusting heaps. These results raise the question of whether the pairing heap, or any other kind of self-adjusting heap, has efficiency matching the lower bounds.  Sinnamon and Tarjan~\cite{SinnamonT25} answered this question in the affirmative for two much-more-recently-discovered, closely-related heap implementations, slim and smooth heaps.  They also showed that the multipass pairing heap, a variant of the pairing heap, has efficiency matching the lower bounds except for the decrease-key operation, for which they obtained a bound of $\OO(\log\log n \cdot \log\log\log n)$, off by a factor of $\log\log\log n$ from the lower bound.

We introduce a simplified version of the pairing heap that we call the \emph{pure pairing heap}.  Our idea is to do a delete-min by a single pairing pass, thereby eliminating the assembly pass done in standard pairing heaps.  The pure pairing heap is a one-tree version of a multitree variant described in the original paper on pairing heaps~\cite{FSST86}, there called a \emph{lazy pairing heap}.  A more recent paper~\cite{purepairingheap2021} used the term ``pure pairing heap'' for this lazy version, but we prefer to reserve this term for the one-tree version, which is simpler and more in the spirit of the standard pairing heap.

We analyze pure pairing heaps by extending Sinnamon and Tarjan's analysis of multipass pairing heaps to them.  We obtain amortized time bounds of $\OO(1)$ per insert and meld, $\OO(\log n)$ per delete-min, and $\OO(\log\log n \cdot \log\log\log n)$ per decrease-key.  Here $n$ is the number of items that are in the heap or heaps just before the operation and that are eventually deleted.  (Items that are never deleted do not count in this bound.)  These bounds are tight except for the decrease-key bound, which is a factor of $\log\log\log n$ from being tight.

The main novelty in our analysis is to group certain nodes together and study the links between them.  If we treat each group in isolation, the multipass analysis applies to the links within a group.  We also need to bound the number of groups and the number of links between nodes in different groups.  This requires additional new ideas.

Our analysis of pure pairing heaps also applies to lazy pairing heaps and gives essentially the same bounds, depending on the particular version of lazy pairing heap.

The remainder of our paper consists of nine sections.  Section~\ref{S:versions-of-pairing-heaps} describes the versions of pairing heaps that we study in detail.  Section~\ref{S:terminology} gives some basic terminology about these data structures.  Section~\ref{S:node-groups} presents our key new idea, that of \emph{node groups}.  Section~\ref{S:pairing-links} develops properties of pairing links, which are the primitive operations we need to count.  Section~\ref{S:node-ranks} studies \emph{node ranks}, which are our main tool for obtaining amortized bounds on the number of pairing links.  Section~\ref{S:link-ranks} studies \emph{link ranks}, which help measure the effect of pairing links on node ranks.  Section~\ref{S:winner-links} bounds the number of \emph{winner links}, the critical subset of links we need to count, and completes our analysis of pure pairing heaps.  Section~\ref{S:lazy-pairing-heaps} extends our results to lazy pairing heaps.  We conclude in Section~\ref{S:remarks} with some final remarks.    

\section{Versions of pairing heaps}\label{S:versions-of-pairing-heaps}

We consider heap implementations that represent a heap by one or more rooted trees whose nodes are the heap items. Such a representation of a heap is an \emph{endogenous} data structure: The heap items \emph{are} the tree nodes, rather than being stored \emph{in} the tree nodes. Henceforth we shall speak of nodes rather than items.  Each tree is heap-ordered by node key: If node $v$ with key $v.key$ is the parent of node $w$ with key $w.key$, then $v.key \leq w.key$.  Thus the root of a tree is a node of minimum key in the tree.  Access to a tree is via its root.

The fundamental primitives for modifying rooted, heap-ordered trees are \emph{linking} and its inverse, \emph{cutting}.  A \emph{link} of the roots of two node-disjoint heap-ordered trees combines the trees by making the root of smaller key the parent of the other root, breaking a tie arbitrarily.  The surviving root is the \emph{winner} of the link; the new child is the \emph{loser} of the link.  We denote by $vw$ a link won by $v$ and lost by $w$.

The children of each node form an ordered list.  The order depends on the way links are done, as we shall describe.  We think of this order as being left to right: the first and last child of a node are its \emph{leftmost} and \emph{rightmost} child, respectively.

A \emph{cut} of a link $vw$ breaks the link, breaking the tree containing $v$ and $w$ into two trees, one rooted at $w$ and containing the subtree of $w$ as it existed before the cut, and one containing $v$ that is unchanged except for the removal of $w$ and its descendants.

In the heap implementations we consider, each link makes the loser the new \emph{leftmost} child of the winner.  This orders the children of a node by link time, latest leftmost.

The original version of the pairing heap, which we call \emph{the standard pairing heap}, or just the \emph{pairing heap}, represents a heap by a single rooted, heap-ordered tree.  The heap operations are done as follows:

\begin{itemize}
\item[] $\Makeheap()$: Create and return an empty tree.

\item[] $\Findmin(H)$: Return the root of $H$.

\item[] $\Insert(H, v)$: Make $v$ into a one-node heap.  Meld this heap with $H$.

\item[] $\Meld(H_1, H_2)$: If either $H_1$ or $H_2$ is empty, return the other; otherwise, return the tree formed by linking the roots of $H_1$ and $H_2$. 

\item[] $\Decreasekey(H, v, k)$: Set $v.key = k$.  If $v$ is not the root of $H$, cut the link between $v$ and its parent, and link $v$ with the root of $H$.

\item[] $\Deletemin(H)$: Let $u$ be the root of $H$.  If $H$ has no other nodes, replace $H$ by an empty tree.  Otherwise, cut all the links between $u$ and its children.  This makes the list of children of $u$ into a list of roots.  Link these roots in two passes.  In the first pass, the \emph{pairing pass}, proceed left to right through the root list, linking the roots in adjacent pairs, the first with the second, the third with the fourth, and so on. In the second pass, the \emph{assembly pass}, repeatedly link the current rightmost root with the root to its left until only one root remains.  During both passes, each winner of a link retains its position in the list of roots; each loser is deleted from this list.  Once only one root remains, replace $H$ by the tree rooted at this root.
\end{itemize}

The \emph{multipass pairing heap} also represents a heap by a single rooted, heap-ordered tree.  It does all the heap operations except delete-min in the same way as standard pairing heaps.  To do a delete-min, a multipass pairing heap does not do an assembly pass.  Instead it does repeated pairing passes until there is only one root.

The \emph{pure pairing heap}, which we introduce here, also does all the heap operations except delete-min in the same way as standard pairing heaps. It does a delete-min via a single pairing pass but no assembly pass.  Instead, during the pairing pass it keeps track of a remaining root of minimum key, say $v$.  After the pairing pass, it makes $v$ the parent of all other remaining roots, by deleting $v$ from the list of remaining roots and then catenating this list with the list of children of $v$, with the former to the left of the latter.  We call this the \emph{assembly phase} of the delete-min.  The effect of the assembly phase is to link $v$ with each of the remaining roots other than $v$, proceeding right to left through the list of remaining roots other than $v$.  
One simple way to implement a pure pairing heap is to store three pointers with each node: to its leftmost child; to its right sibling; and to its left sibling, or to its parent if it is the leftmost sibling.  That is, each node has a pointer to a doubly linked list of its children.  With this representation, each link or cut takes $\OO(1)$ time.  The number of pointers per node can be reduced to two~\cite{FSST86}. 

A delete-min in a pure pairing heap takes just one pass through the root list:  At the end of the pairing pass, the algorithm has direct access to both a root $v$ of minimum key and to the leftmost and rightmost roots on the root list.  It deletes $v$ from the root list and catenates the list of remaining roots with the list of children of $v$.  This takes $\OO(1)$ time.

The time bounds we shall derive for pure pairing heaps hold even if we treat the links during the assembly phase as being done explicitly, rather than implicitly via a single list catenation.  Specifically, the analysis applies if we view a delete-min as doing the following: First it does a pairing pass, then it moves a root of minimum key, say $v$, to the right end of the list of roots, and finally it links each root other than $v$ to $v$, proceeding right to left through the root list.  Equivalently, it moves $v$ to the right end of the list of roots and does an assembly pass.

Our analysis also holds for the variant of pure pairing heaps in which the list of children of $v$ is catenated with the list of remaining roots, with the former to the left of the latter.  It also holds if the list of children of $v$ replaces $v$ in the list of remaining roots, and this entire list becomes the new list of children of $v$.  What matters is that the old children of $v$ stay contiguous and in the same order, and the remaining roots stay in the same order.

We call our data structure the pure pairing heap because it captures the idea of the pairing heap in its purest form.  The purpose of the assembly pass in a pairing heap is to link all the winners of the pairing links, so that the result is a single tree rather than a set of trees.  The simplest way to do this is to keep track of a remaining root of minimum key while doing the pairing pass.  Not only is the pure pairing heap arguably simpler than the standard pairing heap, but the bounds we shall prove for it match the best bounds known for any version of pairing heaps.

We also consider a multitree version of the pure pairing heap that we call the \emph{lazy pairing heap}.  A lazy pairing heap represents a heap by a list of roots of heap-ordered trees, with the first root on the list (the \emph{min-root}) having minimum key.  A find-min returns the min-root.  An insertion creates a new one-node heap and melds it with the existing heap.  A meld catenates the lists of roots of the two heaps, putting first the list of the heap whose min-root has smaller key.  A decrease-key of $v$ decrease the key of $v$.  If $v$ is a child, it cuts $v$ from its parent; if $v$ is a root other than the min-root, it deletes $v$ from the root list.  Finally, if $v$ is not the min-root, it links $v$ with the min-root. A delete-min deletes the min-root, making its children into a list of roots.  It catenates this list of roots with the list of remaining roots, with the former first.  Finally, it does a pairing pass, during which it keeps track of a remaining root of minimum key.  After the pairing pass, it makes the root of minimum key the min-root by moving it to the front of the list of roots.

This is the version of lazy pairing heaps described in~\cite{SODASmoothSlim} (and there called a pure pairing heap).  An alternative that avoids doing a link during a decrease-key is to insert the node whose key decreases into the root list, in first or second position depending on whether it becomes the min-root.  An independent alternative is to do an insertion by linking the new node with the min-root, instead of inserting it into the root list.  The version of lazy pairing heaps presented in the original pairing heap paper~\cite{FSST86} differs from these variants in that it does not maintain a min-root.  Instead, it does a find-min by doing a pairing pass and keeping track of a root of smallest key.  It does a delete-min by doing a find-min, deleting the root $u$ returned by the find-min, and catenating the list of new roots, previously the list of children of $u$, with the list of remaining roots, with the latter first.  When doing a decrease-key, it adds the node whose key decreases to the back of the list of roots rather than to the front.  We shall analyze the first version we have presented. Our analysis applies with appropriate changes to these other versions as well, although if a min-root is not maintained, the $\OO(1)$ bound for find-min becomes amortized rather than worst-case. 

\section{Node and link types}\label{S:terminology}

In our analysis, we consider a sequence of heap operations on an initially empty collection of heaps.  Each operation except delete-min does at most one link and takes $\OO(1)$ time worst-case.  A delete-min takes $\OO(1)$ time plus $\OO(1)$ time per pairing link.  It follows that the total time of a sequence of heap operations is $\OO(1)$ per operation plus $\OO(1)$ per pairing link.

Our implementation of insertion treats it as a special case of meld.  We shall assume that each meld not done by an insertion is of two non-empty heaps; that is, we ignore melds that are not done by insertions and that do not do links.  With this assumption, the total number of meld links is at most the number of insertions minus one, since there is at most one new heap per insertion, and each meld link reduces the number of heaps by one.   

We use the following terminology, adapted from~\cite{SinnamonT25} with one change:

\begin{itemize}
\setlength\itemsep{1em}

\item [] A node is \emph{temporary} if it is eventually deleted, \emph{permanent} otherwise.

\item [] A link is a \emph{meld link} (called in~\cite{SinnamonT25} an \emph{insertion link}) if it is done during a meld, a \emph{decrease-key} link if it is done during a decrease-key, a \emph{deletion link} if it is done during a delete-min.

\item [] A deletion link is a \emph{pairing link} if it is done during the pairing pass of a delete-min, an \emph{assembly link} if done during the assembly phase.      

\item [] A deletion link $vw$ is a \emph{left link} if $w$ is to the left of $v$ on the root list before the link, a \emph{right link} otherwise.  In a pure pairing heap, all the assembly links are left links.  Meld and decrease-key links are neither left nor right.

\item [] A link is a \emph{d-link} if it is cut by a delete-min, a \emph{k-link} if it is cut by a decrease-key, an \emph{f-link} if it is never cut. (The ``f'' stands for ``final''.)

\item [] A link is \emph{real} if its winner and loser are both temporary and it is a d-link, \emph{phantom} otherwise.

\item [] A child node is a \emph{real child} if it is connected to its parent by a real link, a \emph{phantom child} otherwise.  Each permanent node that is a child is a phantom child.  Each time a temporary node loses a link, it becomes a real or phantom child, depending on whether the link it loses is a d-link or a k-link: A temporary node cannot lose an f-link.
\end{itemize}

The distinction between temporary and permanent nodes allows us to bound the time per delete-min and decrease-key by functions of the number of temporary nodes instead of the total number of nodes: Insertions of permanent nodes and decrease-key operations on permanent nodes take $\OO(1)$ time, amortized as well as worst-case.

Given a heap operation, we denote by $n$ the number of temporary nodes in the heap or heaps undergoing the operation, just before the operation occurs. Since we treat an insertion as a meld, $n$ for an insertion is the number of temporary nodes in the heap after the insertion.   We denote by $\log$ the base-two logarithm.  We define $\lg n = \log(\max\{n, 1\})$.  If $f$ is a function of $n$, we use the terminology ``$f(n)$ per decrease-key'' to mean the sum of $f(n_i)$ over all decrease-key operations in a sequence of heap operations, where $n_i$ is the number of temporary nodes in the heap undergoing the $i$-th decrease-key, and similarly for ``$f(n)$ per delete-min'' and ``$f(n)$ per insertion''.  When stating order-of-magnitude bounds per operation, such as $\OO(\lg n)$, we assume an implicit ``$+1$'' inside the big $O$, to avoid writing it every time: Each operation takes $\Omega(1)$ time. 

The following lemma justifies our claim that the bounds we shall derive for pure pairing heaps hold even if the assembly links are done explicitly:

\begin{lemma}\label{L:assembly-links}
A delete-min in a pure pairing heap does at least as many pairing links as assembly links.
\end{lemma}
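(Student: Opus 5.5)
The plan is a direct count of the roots involved in one delete-min. Let $u$ be the deleted root and let $k$ be the number of children of $u$ just before the delete-min. These become the $k$ roots on which the pairing pass operates. If $k \le 1$, nothing needs proving: there are no pairing links, and after the pass at most one root remains, so the assembly phase does no links either. So assume $k \ge 2$.

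The first step is to count the pairing links. The pairing pass links the roots in adjacent disjoint pairs from left to right, so it does exactly $\lfloor k/2 \rfloor$ links. Each link removes one root from the list. Hence exactly $r = k - \lfloor k/2 \rfloor = \lceil k/2 \rceil$ roots remain after the pass.

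The second step is to count the assembly links. The assembly phase makes the minimum-key remaining root $v$ the parent of every other remaining root; viewed explicitly, it links $v$ with each of them. It therefore does exactly $r - 1 = \lceil k/2 \rceil - 1$ links.

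The final step is the comparison $\lceil k/2 \rceil - 1 \le \lfloor k/2 \rfloor$. This holds because $\lceil k/2 \rceil - \lfloor k/2 \rfloor \le 1$ for every integer $k$. For $k$ even, the number of assembly links is $k/2 - 1 < k/2$. For $k$ odd, it is $(k+1)/2 - 1 = (k-1)/2$, which equals the number of pairing links. Nothing here is a real obstacle. The only points needing care are the degenerate cases $k = 0, 1$ and the observation that the assembly phase does exactly one link per remaining root other than $v$, whether its links are counted implicitly or explicitly.
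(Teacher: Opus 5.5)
Your proof is correct and is essentially the paper's argument: the pairing pass leaves at most one more root than the number of pairing links, and the assembly phase does one link per remaining root other than the new root. The paper phrases it directly in terms of the number $k$ of pairing links (so at most $k+1$ roots remain, giving at most $k$ assembly links), which avoids your exact floor/ceiling counts and the separate handling of the degenerate cases.
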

\begin{proof}
Consider a delete-min that does $k$ pairing links.  The number of roots remaining after the pairing pass is at most $k+1$, so the number of assembly links is at most $k$. 
\end{proof}

By Lemma~\ref{L:assembly-links}, we only need to bound the total number of pairing links.

\section{Node groups}\label{S:node-groups}

The main new idea in our analysis is that of \emph{node groups}.  Node groups are part of the analysis only; the algorithm does not maintain them.  Each group consists of a list of nodes that are consecutive children of some temporary node or are consecutive roots on the root list during the pairing pass of a delete-min. Each child of a temporary node and each root on the root list in the middle of a delete-min is in exactly one group or not in a group.    Children of permanent nodes and roots not in the middle of a delete-min are not in a group.  We call a node that is not in a group \emph{groupless}.  

Groups change according to the following rules:

\begin{itemize}

\item [] Rule 1 (group creation): When a temporary node $v$ is inserted into a heap or is the new root of a heap at the end of a delete-min, a new, empty, leftmost group of children of $v$ is created.

\item [] Rule 2 (group transformation): Consider a delete-min that deletes the root $u$ of a heap and replaces it by a new root $v$.   Deletion of $u$ at the beginning of the delete-min converts each group of children of $u$ into a group of roots.  These groups retain the order they had as groups of children of $u$.  The assembly phase of the delete-min transforms each group of remaining roots into a group of children of $v$.  These new groups of children of $v$ retain the order they had as groups of roots, and they are to the left of all other groups of children of $v$.    

\item [] Rule 3 (group deletion): An empty group is deleted unless it is the leftmost group of children of some node. 

\item [] Rule 4 (node removal): A node $v$ in a group is removed from its group when it loses a non-assembly link, or loses an assembly link to a permanent node, or has its key changed by a decrease-key, or is found to be the new root during the pairing pass of a delete-min. 

\item [] Rule 5 (node addition): When a node $w$ loses a non-assembly link $vw$ to a temporary node $v$, $w$ is added to the leftmost group of children of $v$.

\end{itemize}

At most two groups are created per temporary node, one when the node is inserted, one when it is deleted.  In a list of roots or children, the groups are in the same order as the nodes they contain, and all the groupless nodes are to the left of all the nodes in groups.

Each group is created as the new leftmost group of children of some node, say $x$.  We call $x$ the \emph{initial parent} of the group.  A group retains its initial parent throughout its existence, even when it becomes a group of roots or a group of children of a different node.  This can happen many times.

A group $G$ goes through two phases during its existence.  The first is from the time $G$ is created as the leftmost group of children of its initial parent until it becomes a group of roots or a non-leftmost group of children as the result of a delete-min.  The second is from the time $G$ becomes a group of roots or a non-leftmost group of children until $G$ is deleted.  We say $G$ is \emph{growing} during the first phase and \emph{shrinking} during the second phase.  A delete-min converts at most two groups from growing to shrinking: the leftmost group of children of the node deleted by the delete-min, and the old leftmost group of children of the new root.  A group can only acquire a new node while it is growing, because once it is shrinking, it is not the leftmost group of children of some node, so it can only lose nodes.  The \emph{heap containing a group} $G$ is the heap containing the nodes in $G$; or, if $G$ is empty and is the leftmost group of children of a temporary node $x$, the heap containing $x$. 

The rightmost node in each group is special.  We call a pairing link \emph{outer} if it is won or lost by the rightmost node of a group, \emph{inner} otherwise.  Each outer link is between nodes in two different groups, or between the two rightmost nodes in a group.  Each inner link is between nodes in the same group, neither the rightmost.

We say a pairing link $vw$ is  \emph{of group} $G$ if $G$ contains $w$ before the link.  The link removes $w$ from $G$.  A pairing pass is a pass \emph{of group} $G$ if it does at least one link of $G$.  A link of $G$ is a \emph{pass}-$i$ link of $G$ if it is done during the $i^\mathrm{th}$ pairing pass of $G$.

Let $uv$ be a real inner pairing link.  We call $uv$ a \emph{winner link} if, after $uv$ is cut by the delete-min that deletes $u$, $v$ wins a real inner pass-2 link before $v$ is removed from its current group; that is, before $v$ loses a pairing link, or a decrease-key cuts $v$ from its current parent, or $v$ is the new root at the end of a delete-min.

We conclude this section by proving a lemma that allows us to do time-shifting when stating logarithmic bounds on the number of nodes in a heap.  Let $v$ be a temporary node.  We denote by $N_v$ the maximum number of temporary nodes in the heap containing $v$ while $v$ is in a heap.  Let $G$ be a group.  We define the \emph{span} of $G$ as follows: If $G$ contains a temporary node when it becomes shrinking, its span is the interval of time from its creation until just before it loses a temporary node for the last time; otherwise, its span is empty.  If the span of $G$ is non-empty, then at any time during its span the heap containing $G$ contains at least one temporary node, since while it is growing it must acquire a temporary node that it retains until it becomes shrinking, and while $G$ is growing the heap containing $G$ contains at least the initial parent of $G$.  We denote by $N_G$ the maximum number of temporary nodes in the heap containing $G$ during its span; if the span of $G$ is empty, $N_G = 0$.

\begin{lemma}\label{L:delete-time-shift}
In a sequence of pure pairing heap operations, (i) the sum of $\lg N_v$ over all temporary nodes $v$ is at most $\lg n + 1.5$ per delete-min; (ii) the sum of $\lg N_G$ over all groups $G$ is at most $3\lg n + 3$ per delete-min. 
\end{lemma}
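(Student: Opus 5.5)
The plan is to charge $\lg N_v$ to delete-mins, using the fact that heaps only ever merge and never split. The only operation that combines heaps is meld, and no operation splits a heap. So if at some time $t$ a set $S$ of temporary nodes lies in one heap, the nodes of $S$ stay together in one heap until each is deleted. For each temporary node $v$ I fix a time $t_v$ at which the heap containing $v$ has $N_v$ temporary nodes, and let $S_v$ be that set. The sets $S_v$, over all $v$, form a laminar family, because two heaps at two times are either disjoint or nested in time order.

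For (i), the basic estimate is this. Take a set $S$ of $s$ temporary nodes lying together in one heap, and order them by deletion time. When the $j$-th node from the end is deleted, the heap still contains at least the $j$ undeleted nodes of $S$, so $n\ge j$ for that delete-min. Hence the delete-mins of $S$ have total $\lg n$ at least $\sum_{j=1}^{s}\lg j = \lg s! \ge s\lg s - s\lg e$. Since $\lg e<1.5$, these delete-mins pay at least $s\lg s$ in total when each is charged $\lg n+1.5$. To turn this into a bound on $\sum_v \lg N_v$, I would prove by backward induction over the laminar family the following invariant. For every set $S$ occurring as a heap state, the charges $\lg n_x+1.5$ of its nodes $x$ (where $n_x$ is $n$ at the delete-min that deletes $x$) total at least $\sum_{x\in S}\lg N_x$. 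Here $N_x$ is measured only from the time of $S$ onward, which is harmless because heap sizes containing $x$ are monotone in the merge structure. The inductive step at a meld uses the convexity inequality $(a+b)\lg(a+b)\ge a\lg a+b\lg b$ together with the $\lg s!$ estimate. The idea is to redistribute the deletion charges of the merged heap so that each node receives at least the logarithm of the largest heap it ever reaches.

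For (ii), I would reduce to (i). I only need to consider groups $G$ with non-empty span. For such a $G$ I associate two temporary nodes: its initial parent $x$, and the temporary node $w$ whose removal ends the span. Throughout the span, the heap containing $G$ is contained in a heap that eventually contains $x$ or $w$, since heaps only merge and $w$ joins $G$ only as a child of a member of $x$'s heap. Therefore $N_G\le\max(N_x,N_w)\le N_x+N_w$, and so $\lg N_G\le \lg N_x+\lg N_w+1$.

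Each temporary node is the initial parent of at most two groups (Rule 1). It ends the span of at most one group, namely the one it is removed from at the relevant moment. So I get $\sum_G\lg N_G\le 3\sum_v\lg N_v+(\text{number of groups})$. The number of groups is at most $2$ per temporary node, and so at most $2$ per delete-min. Combining this with (i) gives a bound of the form $3\lg n+O(1)$ per delete-min.

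Getting exactly the constant $3$ requires tighter bookkeeping. I would use $\lg N_G\le\lg\max(N_x,N_w)$ directly rather than the sum, and charge each group to a single node, not both.

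The main obstacle is the invariant in (i). The naive charge, $\lg n$ at $v$'s own deletion, can be far smaller than $\lg N_v$, since $v$ may be deleted early from a big heap and leave after it shrinks. The $\lg s!$ argument fixes this only on average over a set, and the sets $S_v$ for different nodes are nested rather than equal. I would first try to make the induction on the laminar family precise, going from the last meld backward. If that gets messy, I would instead define a potential $\sum_H s_H\lg s_H$ over current heaps, with $s_H$ the number of temporary nodes in $H$. This potential never decreases at melds, and each delete-min decreases it by at most about $\lg n+\lg e$. The total increase it receives must then be shown to dominate $\sum_v\lg N_v$.
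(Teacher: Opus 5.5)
Your proposal has genuine gaps in both parts.

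\textbf{Part (i).} Your main route does not work. The sets of temporary nodes in heaps at different times are not laminar, because delete-mins remove nodes and insertions add them. A heap $\{a,b\}$ that loses $a$ and then gains $c$ becomes $\{b,c\}$, which is neither disjoint from nor nested with $\{a,b\}$. So the backward induction has no structure to run on, and its inductive step is never carried out anyway.

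Your fallback potential $\sum_H s_H\lg s_H$ is the right object: it equals $\sum_v \lg n_v$, where $n_v$ is the number of temporary nodes in the heap containing $v$. But you stop at the step that matters. What closes it is to treat each summand separately. The term $\lg n_v$ is $0$ before $v$ is inserted and after $v$ is deleted. It rises only at melds and falls only at delete-mins. Hence the sum of its increases is at least its maximum, $\lg N_v$. A delete-min on $n$ nodes lowers the total by $\lg n$ for the deleted node, plus $\lg n-\lg(n-1)\le \lg e/(n-1)$ for each of the other $n-1$ nodes. This is the paper's proof. It gives the stronger fact that the sum of \emph{all} increases of all node scores is at most $\lg n+1.5$ per delete-min, and (ii) needs that stronger fact, not just the bound on $\sum_v\lg N_v$.

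\textbf{Part (ii).} Your reduction fails because both of your counting claims are false. By Rule 1, a node becomes the initial parent of a new group every time it is the new root at the end of a delete-min. That can happen repeatedly, since a meld or a decrease-key can put a smaller root above it, and that root is later deleted. Likewise, a node joins a new group every time it loses a non-assembly link, so it can be the last temporary node of arbitrarily many groups. Thus $\sum_G \lg\max(N_x,N_w)$ can count the same $\lg N_v$ unboundedly often and is not $\OO(\sum_v \lg N_v)$. No bookkeeping that charges each group a per-node maximum can repair this.

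The paper instead bounds $\lg N_G$ by two parts:
\begin{itemize}
\item the initial score of $G$, which is $0$ if $G$ is created by an insertion, and otherwise at most $\lg n$ of the creating delete-min (each delete-min creates at most one group);
\item the increases of $\lg n_G$ during the span of $G$.
\end{itemize}
While $G$ is growing, these increases are exactly increases in the score of its initial parent. While $G$ is shrinking, they are increases in the score of the last temporary node $y$ of $G$, which lies in $G$ throughout that phase. A node is the parent of at most one growing group at a time and lies in at most one group at a time. So these charges use disjoint time intervals of each node's score. By the strong form of (i) they total at most $2(\lg n+1.5)$ per delete-min, and adding $\lg n$ for initial scores gives $3\lg n+3$.
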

\begin{proof}
To prove (i), we define the \emph{score} of a temporary node $v$ to be $\lg n_v$, where $n_v$ is the number of temporary nodes in the heap containing $v$.  The score of a node can change as the number of temporary nodes in the heap containing it changes.  The maximum score of $v$ is $\lg N_v$.  To bound this maximum score we study changes in scores.    

The score of a temporary node $v$ is zero before it is inserted into a heap and zero once it is deleted from a heap, so the sum of the increases in the score of $v$ equals the negative of the sum of the decreases in the score of $v$.  The score of a node $v$ decreases only when a delete-min deletes a node $u$ from the heap containing $v$.  When $u$ is deleted from the heap, the score of $u$ decreases from $\lg n$ to $0$, and the score of each of the other $n-1$ temporary nodes in the heap decreases by $\lg n - \lg(n-1) \leq (\lg e)/(n-1) < 1.5/(n-1)$.  The sum of the decreases in node scores caused by the delete-min is thus at most $\lg n + 1.5$.  Summing over the delete-mins, the sum of the decreases in node scores over a sequence of heap operations is at most $\lg n + 1.5$ per delete-min.  This is also an upper bound on the sum of the increases in node scores.  The sum of the increases in the node scores of a node $v$ is at least $\lg N_v$.  This gives (i).

We prove (ii) similarly, using (i).  If $G$ is a group with non-empty span, we define its \emph{score} during its span to be $\lg n_G$, where $n_G$ is the number of temporary nodes in the heap containing $G$.

We bound the maximum score of a group by separately bounding its initial score and the sum of later increases in its score.  There are two cases, depending on whether a group is created during an insertion or during a deletion.  Let $G$ be a group with non-empty span, and let $y$ be the last temporary node deleted from $G$.  Suppose $G$ is created during an insertion of a temporary node $x$.  Group $G$ is created as soon as the single-node heap containing $x$ is created, so the initial score of $G$ is $0$.  We charge any increase in the score of $G$ while $G$ is growing to the corresponding increase in the score of $x$.  We charge any increase in the score of $G$ while $G$ is shrinking to the corresponding increase in the score of $y$.  Suppose on the other hand that $G$ is created by a delete-min that deletes node $u$.  The initial score of $G$ is at most $\lg n$, where $n$ is the number of nodes in the heap containing $u$ just before $u$ is deleted.  We charge this initial score to the delete-min of $u$.  Let $v$ be the initial parent of $G$.  We charge any increase in the score of $G$ while it is growing to the corresponding increase in the score of $v$, and we charge any increase in the score of $G$ while it is shrinking to the corresponding increase in the score of $y$.

Since a node can be the parent of at most one growing group at a time, and since a node can be in at most one group at a time, (i) implies that the sum of all the increases in group scores is at most $2\lg n + 3$ per delete-min.  Combining this with the $\lg n$ charge per delete-min for the initial score of groups created by delete-mins gives (ii).
\end{proof}

Now we are ready to bound the number of pairing links.  Our analysis consists of two parts.  First we bound the number of pairing links as a function of winner links and heap operations.  Then we bound the number of winner links.  The first part is straightforward.  The second part adapts and extends the analysis of multipass pairing heaps.

\section{Pairing links}\label{S:pairing-links}

In this section we bound the number of pairing links as a function of winner links and heap operations. To start, we study the effect of pairing passes on a group.  This gives us a bound on the number of outer links.  Then we bound the number of inner links.

\begin{lemma}~\label{L:group-pairing-passes}
Let $G$ be a group.  Let $n_i$ and $k_i$ be the number of nodes in $G$ just before pass $i$ of $G$ and the number of inner pass-$i$ links of $G$, respectively.  (i) For all $i\geq 1$, $n_i/2-3/2\leq k_i\leq n_i/2-1/2$. (ii) If $i$ and $i+1$ are passes of $G$, then $n_{i+1}\leq n_i/2+1$. (iii) If $i$ and $i+1$ are passes of $G$, then $k_{i+1} \leq k_i/2 +1/2$. (iv) If $i$ is a pass of $G$, then $k_i -1 \leq (k_1-1)/2^{i-1}$. (v) The number of inner links of $G$ is at most $2k_1+1$.  (vi) If $i$ is a pass of $G$, the number of passes of $G$ after pass-$i$ is at most $\lg(n_i-2)+2$.  (vii) The total number of passes of $G$ is at most $\lg(n_1-2)+3$. 
\end{lemma}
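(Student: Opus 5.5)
The plan is to analyze a single pairing pass of $G$ by looking only at the positions of $G$'s nodes, which are consecutive on the root list. Then we iterate, using the facts that $G$ is shrinking from its first pass on and that every pass of $G$ removes at least one node. Since $G$ is a group of roots during any pass, it is shrinking, so by Rules 4 and 5 it never gains nodes after pass $1$. Between passes it can only lose nodes, through decrease-keys, losses of assembly links to permanent nodes, or removal of the new root. An assembly link lost to a temporary node moves $G$ intact into the children of the new root.

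For (i), number the nodes of $G$ from $1$ to $n_i$ left to right just before pass $i$. The pairing pass pairs adjacent roots globally, so within $G$ the pairs are either $(1,2),(3,4),\dots$ or, if node $1$ is paired with a root to its left, $(2,3),(4,5),\dots$. The inner links are exactly the pairs lying inside $G$ that avoid node $n_i$. These are the pairs inside the first $n_i-1$ positions, with one of the two parity offsets. So $k_i\in\{\lfloor (n_i-1)/2\rfloor,\lfloor (n_i-2)/2\rfloor\}$, which lies between $n_i/2-3/2$ and $n_i/2-1/2$.

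For (ii), observe that every pair lying wholly inside $G$, whether inner or the outer link between the two rightmost nodes, removes its loser from $G$. The number of such pairs is at least $\lfloor (n_i-1)/2\rfloor\ge n_i/2-1$. Hence $n_{i+1}\le n_i-(n_i/2-1)=n_i/2+1$, because nodes can only be lost between passes.

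For (iii), combine $k_{i+1}\le n_{i+1}/2-1/2\le n_i/4$ with $n_i\le 2k_i+3$. This alone gives only $k_i/2+3/4$, so I will sharpen it by integrality, using the floor expressions from (i) and splitting on the parity of $n_i$ and on the offset. In the bad case where $n_i=2k_i+3$ is odd, the count of pairs inside $G$ is $(n_i-1)/2$, which yields $n_{i+1}\le (n_i+1)/2$ and hence the bound $k_i/2+1/2$.

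Part (iv) then follows by induction from (iii), rewritten as $k_{i+1}-1\le (k_i-1)/2$.

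For (v), rather than summing (iv), I will count removals directly. Each inner link removes a non-rightmost node of $G$ permanently. An inner link needs two non-rightmost nodes of $G$, so the rightmost node and the last surviving non-rightmost node never lose inner links. This gives at most $n_1-2\le 2k_1+1$ inner links, using (i).

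For (vi), iterate (ii) in the form $n_{j+1}-2\le (n_j-2)/2$, so that $n_{i+t}-2\le (n_i-2)/2^t$. Since $n$ is an integer, fewer than $1$ means at most $2$ nodes remain. This happens after at most $\lfloor\log(n_i-2)\rfloor+1$ further passes when $n_i\ge 3$, and immediately otherwise. After that point, every pass of $G$ removes at least one node of $G$, since by definition it contains a link of $G$. So at most $2$ more passes can occur while $n\le 2$, and a slightly more careful count of the transition pass gives at most $\lg(n_i-2)+2$ passes after pass $i$. Part (vii) is (vi) with $i=1$ plus one for pass $1$ itself. The degenerate case $n_1\le 2$ is covered because $\lg$ of a value at most $1$ is $0$, and at most $2\le 3$ passes can occur.

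The main obstacle is the constant bookkeeping in (iii) and (vi). The crude real-number bounds lose an additive $1/4$ in (iii) and one pass in (vi), so the parity case analysis and the fact that each pass strictly decreases $n$ must be used carefully at the boundary.
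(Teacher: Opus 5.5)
Your proposal is correct. For (i)--(iv), (vi) and (vii) it follows the paper's argument: the same parity and offset analysis of a single pass, and the same recurrence $n_{i+1}-2\le (n_i-2)/2$. In (iii), your observation that the crude bound $k_i/2+3/4$ is too weak only when $n_i=2k_i+3$ is a compact regrouping of the paper's split on the parity of $n_i$.

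The genuine difference is in (v). The paper writes the inner links as $k_1$ plus at most $n_2-1$ later ones, then invokes the case bounds on $n_2$ from the proof of (ii). You count removals directly instead. Your count is sound, but it is cleaner to phrase it as follows.
\begin{itemize}
\item From the start of pass $1$ the group only loses nodes.
\item An inner link needs at least three nodes in $G$: two non-rightmost ones plus the current rightmost one. It also removes its loser.
\item So inner links occur at strictly decreasing values of $|G|$ in $[3,n_1]$, giving at most $\max\{n_1-2,0\}\le 2k_1+1$ by (i).
\end{itemize}
This wording avoids the ambiguity of ``the rightmost node,'' which can change over time. Your route is shorter, uses only the lower bound in (i), and gives a bound in terms of the group's initial size. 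The paper's route instead reuses estimates it has already derived.

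Finally, in (vi) no extra care is needed at the transition. Since $n_{i+t}$ is measured just before pass $i+t$, the $t=\lfloor\log(n_i-2)\rfloor+1$ passes that bring $G$ down to at most two nodes include pass $i$ itself. The at most two further passes then give at most $\lfloor\log(n_i-2)\rfloor+2$ passes after pass $i$, exactly as claimed.
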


\begin{proof}
If $n_i$ is odd, $k_i=n_i/2 - 1/2$ or $k_i=n_i/2-3/2$, depending on whether the leftmost member of $G$ participates in a pass-$i$ inner link of $G$ or not.  In either case there are at least $n_i/2-1/2$ pass-$i$ links of $G$, each of which reduces the number of nodes of $G$ by at least $1$.  Thus (i) and (ii) hold if $n_i$ is odd.  If $n_i$ is even, $k_i=n_i/2-1$, whether or not the leftmost member of $G$ participates in a pass-$i$ inner link.  Thus (i) and (ii) hold if $n_i$ is even.

If $n_i$ is odd, $k_i \geq n_i/2 -3/2$ and $n_{i+1} \leq n_i/2 +1/2$ by the proof of (i) and (ii), and $k_{i+1}\leq n_{i+1}/2 -1\xxyedit{/2}$ by (i).  These inequalities combine to give $k_{i+1}\leq k_i/2+1/2$, so (iii) holds if $n_i$ is odd.  If $n_i$ is even, $k_i = n_i/2-1$ and $n_{i+1} \leq n_i/2 +1$ by the proof of (i) and (ii), and $k_{i+1}\leq n_{i+1}/2 -1/2$ by (i). These inequalities combine to give $k_{i+1}\leq k_i/2+1/2$, so (iii) holds if $n_i$ is even.

By (iii), if $i$ and $i+1$ are passes of $G$, then $k_{i+1}-1 \leq (k_i-1)/2$.  Induction on $i$ gives (iv).

If $n_1$ is odd, $k_1\geq n_1/2-3/2$ and $n_2 \leq n_1/2+1/2$ by the proof of (i) and (ii).  The number of inner links of pass greater than $1$ is at most $n_2-1$.  These inequalities combine to give (v) if $n_1$ is odd.  If $n_1$ is even, $k_1=n_1/2-1$ and $n_2\leq n_1/2+1$ by the proof of (i) and (ii), and the number of inner links of pass greater than $1$ is at most $n_2-1$. These inequalities combine to give (v) if $n_1$ is even. 

Suppose pass $i$ is not the last pass of $G$.  By (ii), $n_{i+1}-2 \leq (n_i-2)/2$.  By induction on the number of passes, $n_{i+j}-2 \leq (n_i-2)/2^j$ for pass $i+j$ of $G$ with $j\geq 1$. Hence after at most $\lg(n_i-2)+1$ passes of $G$ following pass $i$ of $G$, $G$ contains at most one node.  Once $G$ contains one node, there is at most one more pass of $G$.  This gives (vi) and (vii).
\end{proof}

Lemma~\ref{L:group-pairing-passes}(vi) gives us a bound on the number of outer links, via the following lemma:

\begin{lemma}\label{L:pairing-passes}
In a sequence of pure pairing heap operations, the total number of pairing passes summed over all groups is $\OO(1)$ per insertion plus $\OO(1)$ per decrease-key plus at most $3\lg n$ per delete-min.  
\end{lemma}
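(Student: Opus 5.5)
We bound, for each group $G$, the number of its passes. By Lemma~\ref{L:group-pairing-passes}(vii) this is at most $\lg(n_1-2)+3$, where $n_1$ is the size of $G$ just before its first pass. We then sum over groups, splitting the sum into an additive constant per group and a logarithmic part per group.

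\textbf{The constant part.} By the remark after the group rules, at most two groups are created per temporary node, one at insertion and one at deletion. Groups whose initial parent is permanent are not needed: a group is created only for a temporary node. So the number of groups is $\OO(1)$ per insertion plus $\OO(1)$ per delete-min. Hence the constant part contributes $\OO(1)$ per insertion and $\OO(1)$ per delete-min. The latter is absorbed by the implicit ``$+1$'' in the $3\lg n$ per delete-min term.

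\textbf{The logarithmic part.} It remains to bound the sum over groups of $\lg(n_1-2)$, and we want to charge it against $\lg N_G$ so that Lemma~\ref{L:delete-time-shift}(ii) applies. The natural plan is to show $n_1 \le N_G + \OO(1)$, or more precisely that $\lg(n_1-2)\le \lg N_G$ plus something chargeable to decrease-keys or insertions. The obstacle is that $n_1$ counts all nodes in $G$, while $N_G$ counts only temporary nodes in the heap during the span of $G$. Permanent nodes in $G$ are not counted by $N_G$. I would handle this by arguing that a permanent node never enters a group: by Rule~5, a node is added to a group only when it loses a non-assembly link to a temporary node, and a permanent node could do that. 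So permanent nodes can be in groups, and I need a fallback.

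\textbf{Fallback.} Instead of $n_1$, use the first pass at which $G$ contains a temporary node, or charge by a different measure. Alternatively, bound the passes of $G$ only up to the time $G$ last loses a temporary node, since passes after that are passes of groups consisting solely of permanent nodes. Such passes I would bound separately: once $G$ has no temporary node at the start of a delete-min, those roots remain forever after pass, and each group is shrinking. I expect the intended accounting is that a pass of $G$ counts only if it links a temporary node, or that Lemma~\ref{L:group-pairing-passes}(vi) is applied at a pass $i$ where $n_i$ is bounded by the number of nodes in the heap. That number of nodes is at most $N_G$ plus permanent nodes, and the permanent ones must be controlled.

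\textbf{Main obstacle.} This permanent-versus-temporary issue, together with empty spans, is the step I expect to be hardest. My first attempt will be to show that within the span of $G$, $n_i-2\le N_G$ whenever pass $i$ is within the span. Passes after the span would then number $\OO(1)$, using Lemma~\ref{L:group-pairing-passes}(vi) and the fact that a group with no temporary nodes has all links phantom, which is analyzable separately. Granting that, summing $\lg N_G+\OO(1)$ over groups and invoking Lemma~\ref{L:delete-time-shift}(ii), which gives $3\lg n+3$ per delete-min, yields the statement. Any groups with empty span are charged $\OO(1)$ each to their creating operation, which may be a decrease-key-induced event, accounting for the $\OO(1)$ per decrease-key term.
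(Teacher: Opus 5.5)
Your proposal leaves the essential step unproved, and the fallbacks you sketch are false. Comparing $\lg(n_1-2)$ from Lemma~\ref{L:group-pairing-passes}(vii) with $\lg N_G$ cannot work, because a group can contain arbitrarily many permanent nodes. Suppose a temporary root $x$ receives $m$ inserted permanent nodes of larger key. Each loses a meld link to $x$ and joins $x$'s leftmost group by Rule~5. If you keep one temporary node in that group so its span is non-empty, you get $n_1-2=\Theta(m)$ during the span while $N_G=\OO(1)$, so the inequality $n_i-2\le N_G$ you plan to prove fails. Similarly, a group that has lost its last temporary node, or never had one (empty span), can still undergo $\Theta(\lg m)$ passes as long as the new roots are temporary. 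That its links are phantom bounds nothing, so these passes are neither $\OO(1)$ per group nor chargeable to the group's creation. Also, groups are never created by decrease-keys (Rule~1), so the $\OO(1)$-per-decrease-key term cannot arise the way you suggest.

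The missing idea is to charge pass by pass instead of group by group.
\begin{itemize}
\item If some link in a pass of $G$ that involves a node of $G$ is won by a permanent node, that link can never be cut by a delete-min. So it is either a k-link, charged to the decrease-key that cuts it, or an f-link. The loser of an f-link is necessarily permanent and never relinked, so charge it to that loser's insertion. Each link is charged $\OO(1)$ times.
\item If $G$ is the rightmost group of roots, charge the pass to the delete-min doing it; there is one such pass per delete-min.
\item In every other pass, each node of $G$ participates in a link and all these links are won by temporary nodes. So every permanent node of $G$ loses and leaves $G$, and afterwards $G$ contains only temporary nodes. Being shrinking, $G$ never acquires new ones.
\end{itemize}
From that point on the size of $G$ is bounded by $N_G$. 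Then Lemma~\ref{L:group-pairing-passes}(vi) bounds the number of passes of the third kind by $\lg N_G+\OO(1)$. Lemma~\ref{L:delete-time-shift}(ii), together with the bound of two groups per temporary node, gives the lemma. The passes driven by permanent nodes are exactly what the $\OO(1)$ per insertion and per decrease-key terms pay for.
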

\begin{proof}
Let $G$ be a group.  We account for its pairing passes as follows.  If some link of $G$ during the pass is won by a permanent node, this link is either an f-link or a k-link.  We charge the pass to the insertion of the permanent node that loses the f-link or to the decrease-key that cuts the k-link, respectively.  If no link of $G$ during the pass is won by a permanent node, and $G$ is the rightmost group of roots just before the pass, we charge the pass to the insertion of the node that is deleted by the delete-min that does the pairing pass.  The remaining possibility is that every node of $G$ participates in a link during the pairing pass, and each of these links is won by a temporary node.  In this case we charge the pass to $G$ itself.  Group $G$ contains only temporary nodes after such a pass, so once this case occurs, it repeats.  By Lemma~\ref{L:group-pairing-passes}(vi), the number of times this case can occur is at most $\lg(N_G-2)+3$.   Summing the charges and applying Lemma~\ref{L:delete-time-shift}(ii) to the charges to each group gives the lemma.      
\end{proof}

\begin{theorem}\label{T:outer-link-bound}
In a sequence of pure pairing heap operations, the number of outer links is $\OO(1)$ per insertion plus $\OO(1)$ per decrease-key plus at most $6\lg n$ per delete-min. 
\end{theorem}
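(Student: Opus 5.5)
We will charge each outer link to a pair (group, pairing pass of that group) and then apply Lemma~\ref{L:pairing-passes}. Each outer pairing link $vw$ is, by definition, won or lost by the rightmost node of some group. In a single pairing pass, each node participates in at most one pairing link. Hence for a fixed group $G$, at most one outer link per pass has the rightmost node of $G$ as a participant.

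Each outer link is won or lost by the rightmost node of at least one group, and at most two such groups are involved (winner's and loser's). We charge the link to one such group $G$ whose rightmost node participates. The pass in question is then a pass in which $G$'s rightmost node takes part in a link.

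The subtle point is whether this pass counts as a pass \emph{of} $G$, since the definition requires a link of $G$, i.e.\ a link whose loser is in $G$. If $G$'s rightmost node loses the link, the pass is a pass of $G$. If instead the rightmost node $r$ of $G$ wins the link $rw$, we split on where $w$ lies.
\begin{itemize}
\item If $w$ is in $G$, the link is again of $G$.
\item If $w$ is in a different group $G'$, then $w$ must be the rightmost node of $G'$ or the partner of the rightmost node. Since groups are contiguous and links pair adjacent roots, $w$ is the leftmost node of the group immediately to the right of $G$, or the rightmost node of the group to its left. Either way $w$'s group $G'$ has a pass at this time. If $w$ is rightmost in $G'$ we charge $G'$, which loses. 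Otherwise $w$ lies in $G'$ adjacent to $G$, and we charge the pass of $G'$.
\item If $w$ is groupless, we charge the pass to the operation responsible: the delete-min's own $\OO(1)$, since groupless nodes are leftmost and only the boundary link can mix them.
\end{itemize}
Thus every outer link is charged to a (group, pass of that group) pair. Each such pair receives at most a constant number of charges; to get the constant $6$, we argue it is at most $2$. A pass of $G$ can receive at most one charge from a link involving $G$'s rightmost node, and at most one from a link involving its leftmost node as loser against a neighboring group's rightmost node. Each of these is one link, since a node is in one link per pass.

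Multiplying the bound of Lemma~\ref{L:pairing-passes} by $2$ gives $\OO(1)$ per insertion and decrease-key plus $6\lg n$ per delete-min, with the $\OO(1)$ for groupless boundary links absorbed per delete-min.

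The main obstacle is precisely this bookkeeping: making sure each outer link is assigned to a pass that genuinely is a pass of the charged group, and that no group-pass pair gets more than two charges. I would try first to always charge the group of the loser $w$. The link is then automatically of that group, and it is outer only if $w$ or the winner is rightmost in its group. Each pass of a group $G'$ then has at most two outer losing links: one involving the rightmost node of $G'$, and one where the leftmost node of $G'$ loses to the rightmost of the preceding group. That directly gives the factor $2$.
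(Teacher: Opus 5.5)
Your proposal is correct and matches the paper's argument. The paper's proof only asserts that each group has at most two outer links per pass and then invokes Lemma~\ref{L:pairing-passes}. Your final scheme justifies that assertion. It charges each outer link to the pass of the loser's group, and in each such pass at most one charged link involves that group's rightmost node. At most one other has the group's leftmost node losing to the rightmost node of the group immediately to its left.

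One small fix: charge the $\OO(1)$ for a possible groupless-loser link to the insertion of the node that delete-min removes, as the proof of Lemma~\ref{L:pairing-passes} does, rather than to the delete-min. That keeps the per-delete-min term at exactly $6\lg n$.
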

\begin{proof}
There are at most two outer links per pass per group.  The theorem therefore follows immediately from Lemma~\ref{L:pairing-passes}.    
\end{proof}

Theorem~\ref{T:outer-link-bound} gives us a bound on the number of outer links.  All that remains is to bound the number of inner links.  By Lemma~\ref{L:group-pairing-passes}(v), the total number of inner links is at most twice the number of pass-$1$ inner links plus one per group, so it is enough to bound the number of pass-$1$ inner links.  We proceed to do this.

\begin{lemma}\label{L:pass-1-inner-links}
The number of pass-$1$ inner links in a sequence of pure pairing heap operations is at most two per real pass-2 inner link plus seven per insertion plus two per decrease-key.
\end{lemma}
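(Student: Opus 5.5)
Each pass-$1$ inner link of a group $G$ is a pairing link $vw$ that removes $w$ from $G$, while the winner $v$ stays in $G$ and survives to the next pass of $G$ unless something removes it first. I would charge each pass-$1$ inner link $vw$ to what eventually happens to its winner $v$ after this pass. Because inner links are between non-rightmost nodes in the same group, the plan is to show that, generically, $v$ goes on to participate in a real inner pass-$2$ link of $G$. Since each pass-$2$ link involves exactly two nodes, each pass-$2$ link receives at most two charges. That is the source of the factor two.

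First I classify the pass-$1$ inner link $vw$ by its own type. If $vw$ is phantom, then either $v$ or $w$ is permanent, or the link is a k-link or f-link. An f-link or a link with a permanent endpoint is charged to the insertion of the permanent node (a permanent node loses at most one f-link), and a k-link is charged to the decrease-key that cuts it. This is where the per-insertion and per-decrease-key constants come from.

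Next I consider the winner $v$ after the pass, taking the case where the link is real. Several events can prevent $v$ from reaching an inner pass-$2$ link of $G$:
\begin{itemize}
\item a decrease-key on $v$ or its ancestors cuts $v$, which is charged to that decrease-key;
\item $v$ becomes the new root found by the pairing pass, which is charged to the delete-min, absorbed into the insertion of the deleted node;
\item $v$ loses an assembly link to a permanent node, which is charged to that permanent node's insertion;
\item $v$ becomes the rightmost node of $G$, so that its next link is outer;
\item the pass-$2$ link involving $v$ is phantom.
\end{itemize}
For the rightmost case, the pairing pass preserves order within $G$, so only $O(1)$ winners per group can become rightmost. I would charge these to the group, hence to the insertion of its initial parent, since there are at most two groups per temporary node. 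That accounts for a few units of the constant seven. If the pass-$2$ link is phantom, I again charge it to the permanent node's insertion or to the cutting decrease-key. Otherwise $v$ participates, as winner or loser, in a real inner pass-$2$ link of $G$. At most two pass-$1$ winners map to each such link, giving the "two per real pass-$2$ inner link" term.

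The main obstacle is making the injectivity of the charges airtight. I must make sure that each insertion or decrease-key is charged only $O(1)$ times, with exact constants summing to seven and two. I also need the between-pass events, such as a node leaving the group, to be attributed correctly. For bookkeeping I would use Rule~4 to enumerate exhaustively how $v$ can leave $G$ before pass $2$, and Lemma~\ref{L:group-pairing-passes}(i) to bound how many non-inner participants a pass can have.
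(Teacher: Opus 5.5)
\paragraph{Where the charging fails.} Your skeleton matches the paper's proof. You charge each pass-$1$ inner link of $G$ to its winner and follow that winner to pass~$2$ of $G$. You then charge it to the inner pass-$2$ link it takes part in, at most two winners per link. The exceptions are paid by the delete-min that makes it the new root, a decrease-key on it, or the group.

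The gap is your repeated rule ``charge to the insertion of the permanent node.'' A permanent node is never deleted, so it can take part in arbitrarily many links, and these charges are not $O(1)$ per insertion. Three of your cases fail this way.
\begin{enumerate}
\item A phantom pass-$1$ link can be a d-link $vw$ with $v$ temporary and $w$ permanent. $w$ can lose such a link every time its current parent is deleted and it re-enters a pairing pass. Charging these to $w$'s insertion is therefore unbounded, and the only bounded target is what later happens to the winner $v$.
\item A permanent node can be the new root of many delete-mins; it is displaced whenever a meld or decrease-key link beats it. Each time, many pass-$1$ winners lose assembly links to it, so charging them to its insertion fails.
\item Charging a phantom pass-$2$ link to its permanent endpoint fails for the same reason. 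It is worst when that endpoint is not a pass-$1$ winner at all, for example the leftover leftmost node of $G$, because then there is no earlier link it won to route the charge through.
\end{enumerate}

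\paragraph{The missing observation.} A link \emph{won} by a permanent node is never cut by a delete-min. So it is either an f-link, whose loser is permanent and loses at most one f-link in its lifetime, or a k-link, cut by a unique decrease-key.

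The paper does not split off phantom pass-$1$ links; it follows every winner. It then routes each permanent case through such an f- or k-link:
\begin{itemize}
\item If the new root is permanent, each winner is charged via the assembly link it lost.
\item If an inner pass-$2$ link joins two pass-$1$ winners and one is permanent, the charge goes via the pass-$1$ link that permanent winner won.
\item A pass-$2$ k-link between temporary winners is charged twice to its decrease-key.
\end{itemize}

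For this to work, the group exceptions must be all winners whose pass-$2$ link is not an inner link with \emph{another pass-$1$ winner}, not just a winner that becomes rightmost. There are at most three such winners per group: the first may pair with the leftover leftmost node or across the group boundary, and the last two may be in outer links. The totals are then one charge per permanent node, one per delete-min, and $3+3$ for the two groups of each temporary node, giving seven per insertion, and at most two per decrease-key.

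Finally, drop decrease-keys on ancestors of $v$. They do not remove $v$ from $G$ under Rule~4, and charging them would not be bounded.
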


\begin{proof}
Let $G$ be a group.  We bound the number of winners of pass-$1$ inner links of $G$.  At the end of pass-$1$ of $G$, $G$ contains the winners of the pass-$1$ inner links of $G$ plus at most two additional nodes, the leftmost and rightmost nodes in the group.  We consider what happens to these winners between the end of pass $1$ of $G$ and the end of pass $2$ of $G$.

Consider the delete-min that does pass $1$ of $G$.  If the new root resulting from this delete-min is a permanent node, then all the assembly links done by the delete-min, including one per pass-$1$ pairing link, are k-links or f-links.  In this case we charge each each winner of a pass-$1$ assembly link to the decrease-key that cuts the k-link of the insertion of the loser of the f-link, respectively.

Suppose on the other hand that the new root resulting from the delete-min is a temporary node.  In this case each pass-$1$ winner \xxydelete{except at most one, the new root,} remains in $G$ until pass $2$ of $G$, unless it is the new root, or it is removed by a decrease-key operation.  In the former case we charge the winner to the delete-min that does pass $1$ of $G$; in the latter we charge the winner to the decrease-key that removes it from $G$.  

Among the pass-$1$ winners still in $G$ at the beginning of pass-$2$, at most three do not participate in an inner pass-$2$ link between two pass-$1$ winners.  We charge these to $G$.  Let $vw$ be an inner pass-$2$ link of two pass-$1$ winners.  If $v$ or $w$ is permanent, the pass-$1$ link it won must be an f-link or a k-link.  We charge $v$ and $w$ to the loser of the f-link or to the decrease-key that cuts the k-link, respectively.  If $v$ and $w$ are temporary but $vw$ is a k-link, we charge $v$ and $w$ to the decrease-key that cuts this k-link.  The remaining possibility is that $vw$ is a real inner pass-$2$ link of $G$.  In this case we charge $v$ and $w$ to $vw$.

In total, each decrease-key and each real pass-$2$ link is charged for at most two pass-$1$ links, each permanent node is charged for at most one pass-$1$ link, each delete-min is charged for at most one pass-$1$ link, and each group is charged for at most three pass-$1$ links.  There are at most two groups per temporary node and exactly one delete-min per temporary node.  The lemma follows.
\end{proof}

\begin{lemma}\label{L:real-pass-2-links}
The number of real pass-$2$ inner links is at most the number of winner links plus $6\lg n$ per delete-min plus $\OO(1)$ per insertion plus $\OO(1)$ per decrease-key.
\end{lemma}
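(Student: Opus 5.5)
The plan is to charge each real pass-$2$ inner link to the link by which its winner entered its group. Let $vw$ be a real pass-$2$ inner link of a group $G$. By Rule~5, $v$ became a member of $G$ only by losing some non-assembly link $u v$ to a temporary node $u$. Rule~2 carries $v$ along with $G$ through group transformations, and $v$ is not removed from $G$ before it wins $vw$ (otherwise $vw$ would not be a link of $G$). So there is a well-defined \emph{entry link} $u v$ for $v$: the last link $v$ lost before winning $vw$ that placed $v$ in its current group. Since $v$ wins $vw$ and is then still in $G$, while $w$ is removed, each entry link is charged by at most one real pass-$2$ inner link, namely the first one its loser wins after it. I would then split into cases according to the type of the entry link $uv$.

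\textbf{Case 1: $uv$ is a real inner pairing link.} Then $uv$ is cut by the delete-min that deletes $u$, and $v$ wins the real inner pass-$2$ link $vw$ before being removed from $G$. This is exactly the definition of a winner link, so we charge $vw$ to the winner link $uv$.

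\textbf{Case 2: $uv$ is a meld link or a decrease-key link.} There are at most one meld link per insertion and at most one link per decrease-key, so this costs $\OO(1)$ per insertion plus $\OO(1)$ per decrease-key.

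\textbf{Case 3: $uv$ is a phantom link.} Then either one of $u,v$ is permanent, and we charge the insertion of that permanent node, or $uv$ is a k-link, and we charge the decrease-key that cuts it. (If $v$ itself is permanent, $vw$ would not be real.)

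\textbf{Case 4: $uv$ is an outer pairing link.} Theorem~\ref{T:outer-link-bound} bounds all outer links by $\OO(1)$ per insertion plus $\OO(1)$ per decrease-key plus $6\lg n$ per delete-min. This is the source of the $6\lg n$ term in the statement.

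Summing over the four cases gives the lemma, provided each charged object receives $\OO(1)$ charges. That holds because of the injectivity observation above.

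The step I expect to be the main obstacle is checking that Case~1 really matches the definition of a winner link. The link cut by the delete-min of $u$ must be the same link through which $v$ is sitting in $G$ when pass $2$ happens. I also need that no intervening assembly link (which does not remove $v$ from $G$ unless won by a permanent node) breaks the identification of ``$v$'s current group''. To handle this I would trace $G$ through Rule~2 carefully: between the cut of $uv$ and pass $2$ of $G$, the only events affecting $v$ are group transformations and assembly links to temporary roots, neither of which removes $v$ from $G$. If $v$ instead lost an assembly link to a permanent node, $v$ would have been removed by Rule~4 and could not win a pass-$2$ link of $G$. If this identification fails in some corner case, for example when $v$ is the leftmost or rightmost member of $G$, I would fall back on charging those $\OO(1)$ exceptional nodes per pass to the group. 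By Lemma~\ref{L:pairing-passes} that costs $\OO(1)$ per insertion and decrease-key plus $\OO(\lg n)$ per delete-min, and the constants would need to be absorbed into the claimed $6\lg n$.
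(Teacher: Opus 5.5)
Your proposal is correct and is essentially the paper's argument. You charge each real pass-$2$ inner link $vw$ to the link $uv$ by which $v$ last joined its group, which is injective because $v$ wins at most one pass-$2$ link while it stays in that group, and you split into winner links, outer links (bounded by Theorem~\ref{T:outer-link-bound}), and meld or decrease-key links. One tightening: your Case~3 is vacuous, and you should say so, because charging to the insertion of a permanent winner $u$ would not be $\OO(1)$ per insertion. It is vacuous because $u$ must be temporary by Rule~5, and $uv$ must be cut by the delete-min of $u$ since a decrease-key cut would remove $v$ from the group, so $uv$ is always real, as the paper asserts; your corner-case fallback is likewise unnecessary.
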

\begin{proof}
Let $vw$ be a real pass-$2$ inner link.  Let $u$ be the initial parent of the group containing $v$ and $w$ when this link is done.  A real link $uv$ was cut when $u$ was deleted by a delete-min.  If $uv$ is an inner link, it is a winner link, since $v$ remains in the same group as $w$ until it wins the link $vw$.  If $uv$ is not an inner link, it is an outer link, a meld link, or a decrease-key link.  Link $vw$ is uniquely determined by link $uv$, since $v$ can only win one pass-$2$ pairing link while it remains in the same group as $w$.  It follows that the number of real pass-$2$ inner links is at most the number of winner links plus the number of outer links plus twice the number of insertions plus the number of decrease-key operations.  Applying Theorem~\ref{T:outer-link-bound} to bound the number of outer links gives the lemma.    
\end{proof}

Let $I$ and $W$, respectively, be the number of pass-$1$ inner links and the number of winner links in a sequence of pure pairing heap operations.  Lemmas~\ref{L:pass-1-inner-links} and~\ref{L:real-pass-2-links} combine to give us the following result:

\begin{lemma}\label{L:winner-links}
$I$ is at most $2W$ plus $\OO(1)$ per insertion plus $\OO(1)$ per decrease-key plus $\OO(\lg n)$ per delete-min.
\end{lemma}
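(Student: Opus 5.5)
This is a direct composition of Lemma~\ref{L:pass-1-inner-links} and Lemma~\ref{L:real-pass-2-links}, so the plan is simply to chain the two inequalities. Write $R$ for the number of real pass-$2$ inner links in the sequence. Let $I_{\mathrm{ins}}$, $I_{\mathrm{dk}}$ and $D$ denote, respectively, the number of insertions, the number of decrease-key operations, and the sum of $\lg n$ over delete-mins.

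First invoke Lemma~\ref{L:pass-1-inner-links}. It gives
\[
I \le 2R + 7I_{\mathrm{ins}} + 2I_{\mathrm{dk}}.
\]
Next invoke Lemma~\ref{L:real-pass-2-links}. It gives
\[
R \le W + 6D + c\,(I_{\mathrm{ins}} + I_{\mathrm{dk}})
\]
for some absolute constant $c$. Here I recall that, per the paper's convention, ``$\OO(\lg n)$ per delete-min'' includes an implicit $+1$ per operation. Substituting the second bound into the first yields
\[
I \le 2W + 12D + (2c+7)\,I_{\mathrm{ins}} + (2c+2)\,I_{\mathrm{dk}},
\]
which is exactly the claimed form. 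The coefficient on $W$ is $2$, and every other term is $\OO(1)$ per insertion or decrease-key, or $\OO(\lg n)$ per delete-min.

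There is no real obstacle. The only point to check is that the quantities being chained are literally the same. In both lemmas, ``real pass-$2$ inner links'' means the same set of links, counted over all groups in the whole sequence. Likewise, the $n$ in the per-delete-min charges is the paper's $n$ for each operation, so adding the per-operation sums is legitimate. Both facts hold by the definitions in Sections~\ref{S:terminology} and~\ref{S:node-groups}.
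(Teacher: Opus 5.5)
Your proposal is correct and is exactly the paper's argument. The paper obtains this lemma simply by substituting the bound of Lemma~\ref{L:real-pass-2-links} into that of Lemma~\ref{L:pass-1-inner-links}, which doubles the $W$, per-delete-min, and per-operation terms, just as you do.
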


We restate this lemma as the following corollary:

\begin{corollary}\label{C:many-winner-links}
$I$ is at most $6(W-I/3)$ plus $\OO(1)$ per insertion plus $\OO(1)$ per decrease-key plus $\OO(\lg n)$ per delete-min.
\end{corollary}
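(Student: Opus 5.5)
This corollary is a pure algebraic rearrangement of Lemma~\ref{L:winner-links}, so the plan is to start from that lemma and manipulate the inequality. Write $R$ for the additive term in Lemma~\ref{L:winner-links}, namely $\OO(1)$ per insertion plus $\OO(1)$ per decrease-key plus $\OO(\lg n)$ per delete-min. Lemma~\ref{L:winner-links} then reads $I \le 2W + R$.

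The key step is to multiply this inequality by $3$, which gives $3I \le 6W + 3R$. Next, subtract $2I$ from both sides to obtain $I \le 6W - 2I + 3R = 6(W - I/3) + 3R$. Since $3R$ is again $\OO(1)$ per insertion plus $\OO(1)$ per decrease-key plus $\OO(\lg n)$ per delete-min, this is exactly the statement of the corollary.

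I expect no real obstacle. The only point to check is that constant multiples of the additive terms stay inside the same big-$\OO$ classes. They do, because each term is summed per operation and multiplying by $3$ only scales the hidden constants. The value of this form of the bound, presumably exploited later, is that it exhibits the quantity $W - I/3$. This quantity can be bounded via potential arguments in which each winner link pays for itself beyond a one-third share of the inner links.
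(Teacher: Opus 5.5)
Your proposal is correct and matches the paper, which presents the corollary simply as a restatement of Lemma~\ref{L:winner-links}: multiplying $I \le 2W + R$ by $3$ and subtracting $2I$ gives $I \le 6(W - I/3) + 3R$, and $3R$ stays in the same per-operation classes. Your closing remark about how this form is used is speculative and does not match the paper: the paper uses the corollary only to dispose of the case $W < I/3$, where the term $6(W - I/3)$ is negative and so $I$ is bounded by the additive terms alone.
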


By Corollary~\ref{C:many-winner-links}, our desired bound on the efficiency of pure pairing heaps holds if $W < I/3$.  Thus our remaining task is to bound $W$, assuming that $W \geq I/3$.

\section{Node ranks}\label{S:node-ranks}
In this and the next two sections, we do the hard part of the analysis, that of bounding the number of winner links, assuming that the number of winner links is at least $1/3$ the number of pass-$1$ inner links.  We apply the multipass-pairing-heap analysis of~\cite{SinnamonT25} to each group separately.  This requires some changes in their definitions, as well as some extra calculations to account for the effect of outer links.

In this section we define and study the first key concept in the analysis, that of \emph{node ranks} and their generalization, $b$-\emph{reduced node ranks}.  These are part of the analysis only: The algorithm does not use or maintain them.  We use node ranks to quantify the effect of links on the data structure.  We bound the number of winner links by showing that, except for a bounded number of exceptions, (i) each winner link can be charged to a set of changes in the magnitudes of certain node ranks totaling at least a positive constant, and (ii) the sum of the magnitudes of all node-rank changes is appropriately bounded.  

The rank of a temporary node, and its $b$-reduced rank for a non-negative integer $b$, are functions of the \emph{size} and the \emph{mass} of the node.  We define these values as follows: 

\begin{itemize}

\item [] The \emph{size} $v.s$ of a temporary node $v$ is the number of temporary nodes that are descendants of $v$ and that are connected to $v$ by a path of real links.  This includes $v$ itself.

\item [] The definition of the \emph{mass} $v.m$ of a temporary node $v$ depends on the state of $v$. If $v$ is a phantom child or a root not in a group, its mass is its size.  If $v$ is a root in the middle of a delete-min, its mass is one plus the sum of its size and those of the temporary nodes that are to its right in its group, not including the rightmost node in the group. If $v$ is a real child of node $u$, its mass is one plus the sum of its size and those of the real children of $u$ that are to its right in its group, not including the rightmost node in the group.  In particular, the mass of a node that is one of the two rightmost nodes in a group is one plus its size. There is one momentary exception to this definition of mass: If a node $w$ loses a real right inner or real left outer link $vw$, just before the link the mass of $w$ increases to equal the mass of $v$ after the link.  We call this a \emph{step-up} in the mass of $w$.  If the link is a real right inner link, the step-up increases the mass of $w$ by the size of $v$ before the link.  If the link is a real left outer link, the step-up increases the mass of $w$ by the mass of $v$ before the link.

\item [] The \emph{node rank} $v.r$ of a temporary node $v$ in a heap is $\lg\lg(v.m/v.s)$; the rank of a temporary node about to be inserted into a heap or just deleted from a heap is $0$.

\item [] Let $b$ be a non-negative integer.  The $b$-\emph{reduced node rank} $v.r(b)$ of a temporary node $v$ in a heap is $\max\{v.r - b, 0\} = \lg(\lg (v.m/v.s)/2^b)$.  In particular, the $0$-reduced rank of $v$ is its rank.
\end{itemize}  

These definitions are the ones used in~\cite{SinnamonT25} with two changes: (i) The mass of a node includes the sizes only of other nodes to its right in its group, excluding the rightmost, rather than of those of all nodes to its right in the list of siblings or roots containing it; and (ii) real left outer links as well as real right inner links cause a step-up in the mass of the loser of the link just before the link is done.  The node-mass restriction limits spillovers of mass from one group to another.  The extension of step-ups to outer left links guarantees that link ranks, which we shall define and study in the next section, are non-negative.  In the analysis of multipass pairing heaps, there are no outer links, so there is no need for outer-link step-ups.  

The sizes, masses, ranks, and $b$-reduced ranks of nodes can change over time as links and cuts of links occur.  The definitions imply that (i) the size and mass of a temporary node $v$ are positive integers, (ii) while $v$ is in a heap, the size of $v$ is non-decreasing, (iii) $v.m \geq v.s$, (iv) the $b$-reduced rank of $v$ is non-negative, and (v) the $b$-reduced rank of $v$ is zero if $v$ is a root not in the middle of a delete-min or is one of the two rightmost nodes in a group.

We develop some results about node ranks, $b$-reduced node ranks, and how much they can change.  Most of these are direct analogues of results in the analysis of multipass pairing heaps~\cite{SinnamonT25}, and the proofs are almost the same.  The exceptions are those affected by outer links.  We provide proofs except in the cases where the proofs are identical or almost identical to those in~\cite{SinnamonT25}.

\begin{lemma}\label{L:small-rank-nodes}
In a sequence of pure pairing heap operations, for any fixed $i$, the number of real inner pass-$i$ links in which the winner or loser has rank at most $1$ just before the link is at most $(15/2)\lg n + 19/2$ per delete-min.
\end{lemma}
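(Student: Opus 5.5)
Fix $i$ and a group $G$. My plan is to charge the small-rank real inner pass-$i$ links of $G$ to the amount of mass that lies to the right within $G$. This follows the corresponding lemma for multipass pairing heaps in~\cite{SinnamonT25}, run one group at a time, and then summed over groups with Lemma~\ref{L:delete-time-shift}.

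\textbf{Key observation.} Rank at most $1$ means $\lg\lg(v.m/v.s)\le 1$, that is, $v.m \le 4 v.s$. Just before pass $i$ of $G$, order the temporary nodes of $G$ (excluding the rightmost) from right to left, and let $S_j$ be the total size of the first $j$ of them. A node $v$ whose rank is at most $1$ has mass equal to one plus its size plus the sizes of the temporary nodes to its right. So $v.m \le 4v.s$ forces $v.s \ge (v.m-1)/4 \ge S_{j-1}/4$, roughly, where $S_{j-1}$ is the total size to its right. Hence $S_j \ge \tfrac54 S_{j-1}$ whenever the $j$-th node has small rank.

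\textbf{Counting.} Both the winner and the loser of a pass-$i$ inner link are nodes of $G$ at the start of the pass. The only twist is the step-up of a loser in a right inner link, which I would handle separately: there the loser's mass after the step-up is at least the winner's mass, so a small-rank stepped-up loser likewise forces a $5/4$ factor relative to the mass to its right. Each small-rank node therefore multiplies the running total $S$ by at least $5/4$. Since $S$ never exceeds $N_G$, the number of small-rank participants in pass $i$ of $G$ is at most $\log_{5/4} N_G + O(1) \approx 3.1\lg N_G + O(1)$. Each such link has one or two small-rank participants, so the number of links is bounded the same way.

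To obtain the stated constants, I would tighten this by using that only every other node in the group wins. Alternatively, following the multipass proof, I would charge directly so that the count per group is at most $\tfrac52\lg N_G + \tfrac{c}{?}$. Summing over groups with Lemma~\ref{L:delete-time-shift}(ii), which gives $\sum_G \lg N_G \le 3\lg n + 3$ per delete-min, yields $\tfrac{15}{2}\lg n + \tfrac{15}{2}$. The remaining additive constant must come from groups with a single pass of the relevant kind, charged to the delete-min and absorbed into the $19/2$. This requires attributing the per-group $O(1)$ term only to groups that actually have an inner pass-$i$ link, together with a careful count. Roughly, there are at most three groups per delete-min (two per temporary node, one delete-min per temporary node), and this constant bookkeeping is where I expect the $19/2$ to come from.

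\textbf{Main obstacle.} The difficulty is getting the exact constant $\tfrac52$ per $\lg N_G$, and thus $15/2$ overall. The naive geometric argument gives $\log_{5/4}2\approx 3.1$, which is too weak. I would first try the sharper counting from~\cite{SinnamonT25}: pair up consecutive participants so that each link involving a small-rank node forces the accumulated size to grow by a factor of at least $2^{2/5}$ per link. Failing that, I would accept a slightly worse constant, since only the $O(\lg n)$ form matters downstream.
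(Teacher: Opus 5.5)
\textbf{Gap: the stated constants are not reached.} Your plan has the same structure as the paper's proof. That proof is Sinnamon and Tarjan's Corollary~7.7 argument~\cite{SinnamonT25} run inside each group, where masses only count sizes within the group, and then summed over groups with Lemma~\ref{L:delete-time-shift}(ii). As written, though, your proposal does not prove the stated bound.

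Your inequality $v.s\ge (v.m-1)/4\ge S_{j-1}/4$ discards the term $v.s$, which is itself part of $v.m$. So you only get growth by $5/4$, about $3.1\lg N_G$ per group, and the sharpening is left unspecified. Your idea that only every other node wins cannot help, because the lemma also counts links whose \emph{loser} has small rank.

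The missing step is to keep that term. From $v.m=v.s+(1+S_{j-1})\le 4v.s$ you get $1+S_{j-1}\le 3v.s$, hence $1+S_j\ge \tfrac43(1+S_{j-1})$. Since $1+S_j\le N_G+1$ and $(4/3)^{5/2}>2$, at most $\log_{4/3}(N_G+1)$ temporary nodes of $G$ have rank at most $1$ at the start of pass $i$. This integer is at most $\tfrac52\lg N_G+1$ when $N_G\ge 2$, and groups with $N_G\le 1$ have no real links. This is exactly the $2^{2/5}$ growth you hoped to get by pairing participants, but it holds per node, so no pairing is needed.

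To finish the count, charge each qualifying link to one small-rank endpoint. The charge is injective because a node takes part in at most one link per pass. Ranks just before a link equal those at the start of the pass, because the pass moves left to right and mass depends only on sizes to the right. The step-up only raises mass, so it never creates a small-rank node.

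\textbf{Additive constant.} Groups are created only when a temporary node is inserted or becomes the new root at the end of a delete-min. So there are at most two groups per delete-min, not three. With the per-group bound above, Lemma~\ref{L:delete-time-shift}(ii) gives $\tfrac52(3\lg n+3)+2=\tfrac{15}{2}\lg n+\tfrac{19}{2}$ per delete-min. Thus $19/2$ is $\tfrac52\cdot 3$ plus $1$ for each of at most two groups; it does not come from groups with a single pass, and three groups would give $21/2$.

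Only the $\OO(\lg n)$ form is used in Theorem~\ref{T:inner-link-bound}, so your weaker constant would suffice downstream. It does not, however, prove the lemma as stated.
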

\begin{proof}
This lemma is the analogue for pure pairing heaps of Corollary 7.7 in~\cite{SinnamonT25}; its proof is the same except for the use of Lemma~\ref{L:delete-time-shift}.
\end{proof}

\begin{lemma}\label{L:decrease-key-rank-increases}
In a pure pairing heap, a real decrease-key link $vw$ increases the $b$-reduced rank only of $w$, by at most $\lg((\lg n)/2^b)$.
\end{lemma}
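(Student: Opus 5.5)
Let $vw$ be a real decrease-key link, done during a decrease-key on $w$. Before the link, the decrease-key has cut $w$ from its old parent, or $w$ is a groupless root that was just melded. After the link, $v$ is the root of the heap and $w$ is its new leftmost child. The plan is to check, one by one, which nodes' sizes and masses change because of the link itself. The cut that precedes it is treated as a separate step, since the statement concerns only the link.

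First I will show the ranks of all nodes other than $w$ do not increase.
\begin{itemize}
\item The link adds $w.s$ to the size of $v$. The root $v$ is not in the middle of a delete-min and is not in a group, so its mass equals its size and its rank stays $0$.
\item The ancestors of $v$ are unaffected, since $v$ is a root.
\item The other children of $v$ are unaffected. By Rule 5, $w$ joins the leftmost group of children of $v$, on the left. The mass of a real child counts only the sizes of nodes to its right in its group, so the masses of $v$'s existing children do not change.
\item The descendants of $w$ keep their sizes and masses, since masses are computed only within sibling groups.
\end{itemize}
So only $w$'s $b$-reduced rank can change.

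Next I bound $w$'s new rank. The link is real, so $w$ becomes a real child of $v$ in a group, with size $w.s \geq 1$ unchanged. Its new mass is $1$ plus $w.s$ plus the sizes of the real children of $v$ to its right in the group, excluding the rightmost node. These nodes are disjoint sets of temporary nodes in the heap, so $w.m \le n + 1$. There is a small subtlety about whether $n$ counts $w$ itself; I will handle it by a careful count or by absorbing it into the implicit $+1$ convention. Hence after the link
\[
w.r(b) = \lg\bigl(\lg(w.m/w.s)/2^b\bigr) \le \lg\bigl((\lg n)/2^b\bigr),
\]
using $w.s\ge 1$. Since the $b$-reduced rank is nonnegative, the increase is at most this value. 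No step-up occurs: step-ups happen only on real right inner or real left outer pairing links, and a decrease-key link is neither left nor right.

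The main obstacle is the accounting around $n$. I need $w.m \le n$ rather than $n+1$. This works if the "one plus" in the mass formula is offset by excluding the rightmost node of the group: when $w$ is not among the two rightmost nodes, that rightmost node is a temporary node not counted in the sum. When $w$ is among the two rightmost nodes, $w.m = w.s + 1 \le n$ because $v$ is not a descendant of $w$. I will go through these two cases to get $w.m/w.s \le n$ exactly.
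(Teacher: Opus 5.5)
Your argument is essentially the one the paper relies on; the paper's proof just carries over that of Lemma 7.20 of Sinnamon and Tarjan. The steps are the same: $v$ stays a groupless root of rank $0$, and $w$ enters at the left end of $v$'s leftmost group, so no other node's mass changes. Also, $w$'s new mass is at most $n$ and its old rank is $0$.

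One step in your last paragraph fails as written. When $w$ is not among the two rightmost nodes of its group, the rightmost node need not be temporary. Rule 5 also puts permanent losers of non-assembly links into groups. For instance, the group could be $[w,a,p]$ with $a$ temporary and $p$ a permanent node that earlier lost a meld link to $v$. Then the rightmost node cannot supply the offset.

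The fix is the observation you already use in your second case, and it removes the need for any case split. Because the link is real, $v$ is temporary. Moreover, $v$ lies in none of the disjoint sets of nodes whose sizes make up $w.m-1$. Hence $w.m \le 1+(n-1)=n$. The implicit ``$+1$'' convention could not have rescued this anyway: it applies only to $\OO$-bounds, and the lemma's bound is exact.

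Finally, the node whose key decreases may be the winner $v$ rather than $w$, which happens when its new key is below the old root's key. This changes nothing, since both endpoints are groupless roots of rank $0$ just before the link.
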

\begin{proof}
This lemma is the analogue for pure pairing heaps of Lemma 7.20 in~\cite{SinnamonT25}, and its proof is the same. 
\end{proof}

\begin{lemma}\label{L:meld-rank-increases}
In a sequence of pure pairing heap operations, the sum of increases in $b$-reduced ranks caused by real meld links is at most $\lg((\lg n)/2^b)$ per decrease-key plus $\lg((\lg n)/2^b)+3/2^b$ per delete-min. 
\end{lemma}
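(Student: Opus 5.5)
A real meld link $vw$ joins the roots of two heaps that contain only roots not in the middle of a delete-min. The plan is to show first that such a link changes the $b$-reduced rank only of the loser $w$, by at most $\lg((\lg n')/2^b)$, where $n'$ is the number of temporary nodes in the melded heap. Then we charge this increase to a later delete-min or decrease-key.

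For the first step, observe what the link changes. It increases the size of $v$ and of nothing else relevant; $v$ stays a root not in a group, so $v.m=v.s$ and its rank stays $0$. The loser $w$ becomes a real child of $v$. Since $vw$ is not a pairing link, $w$ is not added to a group by Rule 5 when it loses it: meld links are non-assembly links, so Rule 5 applies, and $w$ is added to the leftmost group of children of $v$. The mass of $w$ becomes one plus its size plus the sizes of real children of $v$ to its right in that group, excluding the rightmost. All of these are descendants of $v$, so $w.m\le v.s\le n'$ and $w.s\ge 1$. Hence $w.r(b)\le \lg((\lg n')/2^b)$ after the link, and its old rank was $0$. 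No other node's mass changes: the nodes to the right of $w$ in the group do not include $w$ in their masses, and masses of nodes elsewhere depend only on their own siblings. This is exactly the argument of Lemma~\ref{L:decrease-key-rank-increases}.

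The second step handles the time-shift from $n'$ to $n$. The real link $vw$ is a d-link, so it is cut by the delete-min of $v$, and $w$ is temporary. Up to that time, $w$ and $v$ stay in the same heap, since heaps only meld or shrink by delete-mins. The bound $\lg\lg$ of the size of $v$ is at most $\lg\lg N_v$. If we charge the increase directly to the delete-min of $v$, each delete-min receives at most one charge per real meld link that $v$ won. That count is unbounded, so this direct charge does not work. Instead, charge each increase to $w$ itself, using $\lg((\lg N_w)/2^b)$: each temporary node $w$ loses at most one meld link, namely when it is inserted or when its heap's root loses a meld. This fails too, because a root may lose many meld links over time. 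Between two consecutive meld losses by $w$, however, $w$ must regain root status, and that happens only through a decrease-key on $w$ or by $w$ becoming the new root at a delete-min. So we charge the first meld loss of $w$ to its own eventual delete-min, and each later meld loss to the decrease-key or delete-min that made $w$ a root again. With $\lg((\lg N)/2^b)$ charges this gives the shape of the claimed bound.

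Converting $N_w$ to the $n$ at the charged operation is the main obstacle. I would imitate Lemma~\ref{L:delete-time-shift}, using the concavity of $x\mapsto\lg(\lg x/2^b)$ together with a score argument. Define the score of $w$ as $\lg((\lg n_w)/2^b)$, whose decreases at a delete-min on an $n$-node heap total at most $\lg((\lg n)/2^b)$ for the deleted node. The remaining nodes contribute at most $(n-1)\cdot\frac{\lg e}{(n-1)\ln 2\,\lg n\,2^b}$-type amounts, which sum to at most $3/2^b$. This produces the additive $3/2^b$ per delete-min. The charges to decrease-keys then use $n$ at the decrease-key, since the heap size at the moment $w$ becomes a root bounds the later growth only through this score accounting.
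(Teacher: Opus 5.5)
Your first step is sound, apart from the self-contradictory sentence about Rule~5. A real meld link is a non-assembly link won by a temporary node, so $w$ \emph{is} added to the leftmost group of children of $v$. With that fixed, only $w$'s $b$-reduced rank changes, rising from $0$ to at most $\lg((\lg n_{vw})/2^b)$, where $n_{vw}$ is the number of temporary nodes in the melded heap.

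The gap is the time shift, which you flag as the main obstacle but do not resolve. Charging a later meld loss of $w$ to the decrease-key or delete-min that last made $w$ a root fails on its own terms. Between that operation and the meld loss, $w$'s heap can grow without bound: $w$ can become the root of a tiny heap at a delete-min, win many melds, and then lose one. So $\lg((\lg n)/2^b)$ at the charged operation does not cover $\lg((\lg n_{vw})/2^b)$. Saying the growth is paid ``through this score accounting'' is not an argument. You have already used all of $w$'s score increases to bound $\lg((\lg N_w)/2^b)$ for its first meld loss, so drawing on them again double counts. Your scheme also charges each delete-min twice: once for its deleted node (the first-meld-loss charge, which shows up as that node's final score drop $\lg((\lg n)/2^b)$), and once for the reset of its new root. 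Even a repaired version of your scheme therefore gives at least $2\lg((\lg n)/2^b)$ per delete-min, not the stated $\lg((\lg n)/2^b)+3/2^b$.

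The missing idea is to do the accounting by \emph{root intervals}. For a real meld link $vw$, let $t$ be the last time before the link that $w$ became a root. This is either its creation as a one-node heap (score $0$), a decrease-key on $w$ that $w$ won, or a delete-min whose new root is $w$.

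From $t$ through the link, $w$ is the root. So no delete-min touches its heap, and the score $\lg((\lg n_w)/2^b)$ of $w$ is non-decreasing. Hence the rank increase is at most the score at $t$ plus the increase of $w$'s score over this interval. For fixed $w$, these intervals are disjoint from each other. They are also disjoint from $w$'s final root interval, which ends at $w$'s deletion and whose score increase equals the drop at the deletion minus its starting score.

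Since $w$'s total score increase equals its total score decrease, the increases you need are at most two things: $w$'s drops at delete-mins of other nodes, plus the starting score of its final root interval. Each decrease-key and each delete-min starts at most one root interval, for the decrease-keyed node or the new root. The starting score of that interval is at most $\lg((\lg n)/2^b)$. The drops at delete-mins of other nodes total at most $3/2^b$ per delete-min, by your calculation. This gives exactly the claimed bound, with no separate charge for the first meld loss or for the deletion of $w$.
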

\begin{proof}
This lemma is the analogue for pure pairing heaps of Lemma 7.21 in~\cite{SinnamonT25}, and its proof is the same. 
\end{proof}

\begin{lemma}\label{L:inner-link-rank-increases}
Let $G$ be a group, let $a$ be any non-negative integer, and let $k_i$ be the number of pass-$i$ inner links of $G$. The sum of the $b$-reduced rank increases caused by the pass-$i$ inner links of $G$ is at most $(2/2^{a+b})\lg N_G +ak_i$.
\end{lemma}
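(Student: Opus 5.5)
We will follow the analysis of the corresponding multipass lemma in~\cite{SinnamonT25}: track how a pass-$i$ inner link $vw$ of $G$ changes node masses, and sum the resulting rank changes over the pass. An inner link $vw$ has neither endpoint at the right end of $G$. So the masses it affects are those of $w$, and of the nodes of $G$ to the left of the pair (among roots or among real children of $v$). Node sizes change only for $v$, whose size grows by $w.s$; then $v.r$ can only decrease, since the mass of a root in $G$ already counts the sizes to its right.

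The first step is to classify the rank increases. For a right inner link, $w$ undergoes a step-up, so that $w.m$ becomes $v.m$ after the link. For a left inner link, $w$ becomes the leftmost child of $v$ in the leftmost group, with mass one plus its size. We then show that every $b$-reduced rank increase in the pass is accounted for by losers $w$. This gives, for each such $w$, a new rank $\lg\lg(m'/w.s)$, where $m'$ is at most the total mass of the part of $G$ to the right of $w$ at the start of the pass. Masses of roots to the left of the link do not increase, because the sizes are merely regrouped into $v$.

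The second step is to split the increases into two kinds.
\begin{itemize}
\item \emph{Small increases.} If a loser's $b$-reduced rank increases by at most $a$, we charge the increase to its link. This contributes at most $ak_i$.
\item \emph{Large increases.} If a loser's rank increases by more than $a$, its ratio $\lg(m/s)$ grows by a factor exceeding $2^a$. We must show that the new $b$-reduced ranks of such losers are bounded in total.
\end{itemize}

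For the large increases, the plan is a telescoping or geometric argument along the pass. Process the links from right to left. Let $x_j=\lg(m_j/s_j)/2^b$ for the successive large-increase losers. Since the masses are nested suffix sums of sizes within $G$, the quantities $\lg(m_j/s_j)$ over the losers sum to at most $2\lg N_G$. This uses $\prod m_j/s_j \le$ a telescoping product of suffix-mass ratios, bounded by $N_G^2$ because each mass is at most one plus a sum of sizes, and hence at most $2N_G$ or so. Each large-increase loser has new reduced rank $\lg x_j$ with $x_j \ge 2^{a}$. Its increase is at most its new reduced rank, which is at most $x_j/2^{a}$ in the regime where $\lg x \le x/2^a$ whenever $x\ge 2^{a}$ (adjusting for small $a$ by the threshold). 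Summing gives at most $(1/2^{a})\sum_j x_j\le (2/2^{a+b})\lg N_G$.

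The main obstacle is the telescoping product bound, namely showing that $\sum_j \lg(m_j/s_j)\le 2\lg N_G$ for the losers of one pass, including the effect of step-ups. A step-up sets $w.m$ to $v$'s post-link mass, which is the suffix mass starting at $v$ and includes $v.s$, so the ratio is $\mathrm{mass}(\text{suffix at }v)/w.s$. I would pair each loser $w$ with its winner $v$ and write the ratio as $(\text{suffix at }v)/(\text{suffix at }w)$ times $(\text{suffix at }w)/w.s$. Both factors are telescoping-type: the suffixes at $w$ and at the next $v$ to the right differ by exactly the pair sizes. Each factor's product is then at most the total suffix mass, which is at most $N_G+1$. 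This yields the factor $2$, using that masses exclude the rightmost node and count only temporary nodes of $G$'s heap during its span. If this pairing fails for left links, whose losers get mass $1+w.s$, note that those losers' ranks become at most $\lg\lg 2 = 0$, so they contribute nothing.
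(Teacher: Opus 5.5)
Your reduction to the step-ups of losers of right inner links is fine. One correction: the loser $w$ of a left inner link does not get mass $1+w.s$ in general, because its new mass also counts the sizes of real children of $v$ to its right in $v$'s growing leftmost group. What is true is $w.m'\le w.s+v.s<w.m$, so there is still no increase.

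The genuine gap is in the large increases. Write $M_x$ for the suffix mass at $x$ at the start of the pass. For the $j$-th inner right link $v_jw_j$ put $X_j=\lg(M_{w_j}/w_j.s)$ and $Y_j=\lg(M_{v_j}/M_{w_j})$, so that $\lg(m_j/s_j)=X_j+Y_j$. Only the step-up factors $M_{v_j}/M_{w_j}$ telescope (their product is at most $M_1\le N_G$); the factors $M_{w_j}/w_j.s$ do not.

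For instance, insert $n$ leaves under a temporary root $u$, then delete $u$ with keys chosen so every pair is won on the left. All sizes are $1$, and $\sum_j (X_j+Y_j)=\Theta(n\lg n)$. With $b=0$, even the sum of the new ranks is $\Theta(n\lg\lg n)$, whereas $\lg N_G=\Theta(\lg n)$.

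\emph{Case $a=0$.} This is the case Lemma~\ref{L:delete-rank-increases} uses for early passes. Here every positive increase is ``large,'' so your bound $\sum_j\lg(m_j/s_j)\le 2\lg N_G$ is false. Bounding each increase by the loser's new rank cannot work.

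\emph{Case $a\ge 1$.} Your sum bound does hold, but not for the reason you give. It holds only because a $b$-reduced increase exceeding $a\ge 1$ forces $Y_j>X_j$, hence $X_j+Y_j<2Y_j$. Even then, your last inequality $\lg x\le x/2^a$ for $x\ge 2^a$ fails for every $a\ge1$ (e.g.\ $a=1$, $x=3$). You must use $\lg x\le a+x/2^a$ and charge $a$ to the large links as well.

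The missing idea is to bound the \emph{actual increase} of each step-up, not the loser's new rank, by a quantity depending only on the telescoping factor. The claim is that the $b$-reduced rank increase of $w_j$ is at most $a+(2/2^{a+b})Y_j$. Check it by cases:
\begin{itemize}
\item If $X_j+Y_j\le 2^{a+b}$, the new reduced rank is at most $a$, so the increase is at most $a$.
\item If $Y_j\le X_j$, the increase is at most $\lg(1+Y_j/\max\{X_j,2^b\})$. This is at most $1$, which handles $a\ge1$. It is also at most $(\lg e)Y_j/2^b$, which handles $a=0$.
\item If $Y_j>X_j$ and $X_j+Y_j>2^{a+b}$, the increase is at most the new reduced rank. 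That is less than $\lg(2Y_j/2^b)$, which is at most $a+2Y_j/2^{a+b}$ by $\lg(2z)\le 2z$ with $z=Y_j/2^{a+b}$.
\end{itemize}
The pass proceeds left to right and inner links avoid the rightmost node, so the $M$'s seen by each link are the start-of-pass suffix sums. Summing over the pass and telescoping $\sum_j Y_j\le \lg M_1\le\lg N_G$ gives $ak_i+(2/2^{a+b})\lg N_G$.
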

\begin{proof}
This lemma is the analogue for pure pairing heaps of Lemma 7.23 in~\cite{SinnamonT25}, and its proof is essentially the same. 
\end{proof}

Recall from Section~\ref{S:pairing-links} that $I$ is the total number of pass-$1$ inner links in a sequence of pure pairing heap operations.

\begin{lemma}\label{L:delete-rank-increases}
(i) Let $G$ be a group, let $c$ be any non-negative integer, and let $k_1$ be the number of pass-$1$ inner links of $G$.  The sum of the $b$-reduced rank increases caused by the links of $G$ is at most $((6b+6c+22)/2^b)\lg N_G + 8k_1/2^{b+c}$.  (ii) The sum of the $b$-reduced rank increases caused by all the pairing links in a sequence of pure pairing heap operations is at most $((6b+6c+22)/2^b)(3\lg n +3)$ per delete-min plus $8I/2^{b+c}$.   
\end{lemma}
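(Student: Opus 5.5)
We split the links of $G$ into inner and outer links and bound their rank increases separately, then sum over groups for part (ii).

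\emph{Inner links.} For each pass $i$ of $G$ we apply Lemma~\ref{L:inner-link-rank-increases} with a pass-dependent parameter $a=a_i$. The plan is to take $a_i=0$ for the early passes and then let $a_i$ grow so that the $ak_i$ terms form a geometric series, while the total number of passes stays $O(\lg N_G)$.

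By Lemma~\ref{L:group-pairing-passes}(iv), $k_i-1\le (k_1-1)/2^{i-1}$. Choose $a_i = c+i+\lceil\log\ldots\rceil$ only where needed. More simply:
\begin{itemize}
\item For passes $i$ with $k_i\le 1$, use $a=0$. Lemma~\ref{L:inner-link-rank-increases} then gives at most $(2/2^b)\lg N_G$ per pass.
\item For the remaining passes, take $a_i$ with $a_i\le b+c+3$ (so that $2^{-a_i}$ does not blow up) and arrange that $a_ik_i$ sums to at most $8k_1/2^{b+c}$.
\end{itemize}
The second item is where the difficulty lies. A $b$-reduced rank is $0$ unless the rank exceeds $b$. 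So I would instead use the cruder fact that a pass-$i$ link changes a $b$-reduced rank by at most $\lg((\lg N_G)/2^b)$. I would then split the passes as follows:
\begin{itemize}
\item On passes $i\le b+c+3$, apply Lemma~\ref{L:inner-link-rank-increases} with $a=0$. This costs $(2/2^b)\lg N_G$ each, hence at most $(2(b+c+3)/2^b)\lg N_G$ in total.
\item On passes $i>b+c+3$, there are only $k_i\le 1+k_1/2^{i-1}$ links, summing to $O(k_1/2^{b+c})$ links plus one per pass. Apply the lemma with $a=1$ there: its $(1/2^{b})\lg N_G$ term summed over at most $\lg(n_1-2)+3\le \lg N_G+3$ passes (Lemma~\ref{L:group-pairing-passes}(vii)) is not small enough unless those passes are few. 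So instead use $a=0$ only for passes where $k_i\ge 1$, and $a=1$ with the $k_i$ term otherwise.
\end{itemize}
Tuning these constants should yield the claimed $(6b+6c+22)/2^b$ coefficient together with the $8k_1/2^{b+c}$ term. The essential structure is that the first $O(b+c)$ passes cost $O(1/2^b)\lg N_G$ each, and the geometrically decaying later passes contribute $O(k_1/2^{b+c})$.

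\emph{Outer links.} These need a separate argument, because the step-ups on real left outer links are precisely what keeps link ranks nonnegative. An outer link involves the rightmost node of a group, whose $b$-reduced rank is $0$ before the link. Its winner and the loser's mass changes can be bounded by the rank definitions: each outer link raises a $b$-reduced rank by at most $\lg((\lg N_G)/2^b)$. There are at most two outer links per pass, and the number of passes charged to $G$ is at most $\lg(N_G-2)+3$. I expect this to be the main obstacle: obtaining a per-group bound of the form $O(1/2^b)\lg N_G$ rather than $O(\lg\lg N_G)$ per pass. I would try to show that an outer link increases only the loser's reduced rank, and that the step-up makes $w.m/w.s$ grow by at most a factor comparable to an inner link's effect. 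The analogue of Lemma~\ref{L:inner-link-rank-increases} with $a=0$ then applies to each pass's two outer links, giving $(4/2^b)\lg N_G$ per pass; the passes whose links involve nonzero reduced ranks are limited as in the inner case.

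\emph{Part (ii).} Sum (i) over all groups $G$. Lemma~\ref{L:delete-time-shift}(ii) bounds $\sum_G\lg N_G$ by $3\lg n+3$ per delete-min, and $\sum_G k_1(G)=I$. This gives exactly the stated bound.
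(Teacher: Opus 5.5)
Your part (ii) is fine and matches the paper, but part (i) has a genuine gap. You never settle on a choice of the parameter $a$ in Lemma~\ref{L:inner-link-rank-increases} that works, and the choices you do commit to cannot give the stated bound.

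With $a=0$ on the first $b+c+3$ passes and $a\in\{0,1\}$ on later ones, your bound still charges every later pass at least $(1/2^b)\lg N_G$. A group can have about $\lg k_1$ passes with $k_i\ge 1$, and $N_G$ can be arbitrarily large for fixed $k_1$. So once $\lg k_1\gg b+c$, the coefficient of $\lg N_G$ exceeds $(6b+6c+22)/2^b$, and the additive $8k_1/2^{b+c}$ cannot compensate; no tuning of constants fixes this.

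Your outer-link plan has the same defect. $(4/2^b)\lg N_G$ per pass over up to $\lg(N_G-2)+3$ passes can be of order $\lg^2 N_G/2^b$, and nothing limits the number of passes in which an outer link raises a reduced rank.

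Your first instinct, to let $a_i$ grow, was right. Capping $a_i\le b+c+3$ ``so that $2^{-a_i}$ does not blow up'' has the direction backwards: a larger $a_i$ makes $2^{-a_i}$ smaller, which is exactly what is needed.

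The missing idea is to let $a$ grow linearly in the pass index, and to use the same value for the outer links. Take $a_i=\max\{i-b-c-1,0\}$ for pass $i$.

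Each of the at most two outer links of pass $i$ raises only the loser's $b$-reduced rank, by at most $\lg((\lg N_G)/2^b)\le (2/2^{a_i+b})\lg N_G+a_i$. This is the elementary trade-off of Lemma~7.19 in~\cite{SinnamonT25}. Together with Lemma~\ref{L:inner-link-rank-increases}, pass $i$ then costs at most $(6/2^{a_i+b})\lg N_G+a_i(k_i+2)$.

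Restrict this to passes with $k_i\ge 2$, where $k_i+2\le 4(k_i-1)\le 4(k_1-1)/2^{i-1}$ by Lemma~\ref{L:group-pairing-passes}(iv). This restriction also removes the ``plus one per pass'' in your estimate $k_i\le 1+k_1/2^{i-1}$, which would otherwise be multiplied by a growing $a_i$.

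Writing $j=i-b-c-1$, the first $b+c$ passes cost $(6(b+c)/2^b)\lg N_G$. The rest cost at most $\sum_{j\ge 0}\bigl((6/2^{j+b})\lg N_G+4jk_1/2^{j+b+c}\bigr)\le (12/2^b)\lg N_G+8k_1/2^{b+c}$. The $\lg N_G$ terms now decay geometrically, while the price $a_ik_i$ stays summable because $k_i$ roughly halves each pass and $\sum_{j\ge 0} j/2^j=2$.

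Finally, by Lemma~\ref{L:group-pairing-passes}(i), a pass with $k_i\le 1$ occurs only when $G$ has at most five nodes. So all such passes together contain at most five links of $G$, each costing at most $(2/2^b)\lg N_G$. This supplies the remaining $(10/2^b)\lg N_G$ and yields the coefficient $6b+6c+22$.
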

\begin{proof}
We bound the rank increases caused by outer links and sum these increases and those of the inner links, which are bounded by Lemma~\ref{L:inner-link-rank-increases}.

Let $a$ be any non-negative integer.  In a pairing pass of $G$, there are at most two outer links of $G$. Let $vw$ be such a link.  Including the step-up if $vw$ is a left link, the link increases the $b$-reduced rank only of $w$, by at most $\lg((\lg N_G)/2^b)\leq (2/2^{a+b})\lg N_G + a$, where the inequality follows from Lemma 7.19  in~\cite{SinnamonT25} and can be proved by a simple calculation.  Combining this bound with that of Lemma~\ref{L:inner-link-rank-increases}, we find that the sum of the $b$-reduced rank increases caused by the links of pairing pass-$i$ of $G$ is at most $(6/2^{a+b})\lg N_G + a(k_i+2)$, where  $k_i$ is the number of pass-$i$ inner links of $G$ done by the pass. We set $a = a_i = \max\{i-b-c-1, 0\}$ for pass-$i$ and sum this bound over $i$ such that $k_i\geq 2$. %\xxycomment{Without this $-1$ here, then the formula below should contain the term $4jk_1/2^{j + b + c {\color{red}\bf {}- 1}}$, and it needs to change a lot of terms.} 
In doing the sum, we use the substitution $j=i-b-c\xxyedit{{}-1}$ for $i>b+c$, and we use the inequalities $k_i-1\leq (k_1-1)/2^{i-1}$  given by Lemma~\ref{L:group-pairing-passes}(iv) and $k_i+2 \leq 4(k_i-1)$,  which follows from the assumption that $k_i\geq 2$. Note that when $i \geq \lg k_1 +1$, $k_i\leq 1 + (k_1-1)/2^{i-1} < 2$. The sum is at most
\[((6b+6c)/2^b)\lg N_G + \sum_{j=0}^\infty ((6/2^{j + b})\lg N_G + 4jk_1/2^{j+b+c}) \leq ((6b+6c+12)/2^b)\lg N_G + 8k_1/2^{b+c} \]

A pass of $G$ does fewer than two inner links only if $G$ contains at most five nodes just before the pass.  The total number of links of $G$ during these passes is at most five.  Adding a bound of $(10/2^b)\lg N_G$ on the $b$-reduced rank increases caused by these links gives (i). This and Lemma~\ref{L:delete-time-shift} give (ii).       
\end{proof}

\begin{corollary}\label{C:outer-link-rank-bound}
(i) Let $G$ be a group.  The product of $\lg\lg N_G$ times the number of outer links of $G$ is  $\OO(\lg N_G)$ plus at most $k_1/288$, where $k_1$ is the number of pass-$1$ inner links of $G$.  (ii) The sum of this product over all groups in a sequence of pure pairing heap operations is  $\OO(\lg n)$ per delete-min plus at most $I/288$.
\end{corollary}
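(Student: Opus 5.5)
Let $P$ denote the number of passes of $G$. By Theorem~\ref{T:outer-link-bound}'s reasoning, the number of outer links of $G$ is at most $2P$. So the product in (i) is at most $2P\lg\lg N_G$, and it suffices to bound $P\lg\lg N_G$ by $\OO(\lg N_G)$ plus $k_1/576$.

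I would split on the size of $k_1$ relative to $\lg N_G$. By Lemma~\ref{L:group-pairing-passes}(i) we have $n_1\leq 2k_1+3$. Lemma~\ref{L:group-pairing-passes}(vii) then gives
\[
P\leq \lg(n_1-2)+3\leq \lg(2k_1+1)+3=\OO(\lg(k_1+2)).
\]
\emph{Case 1: $k_1\leq (\lg N_G)^2$.} Here $P=\OO(\lg\lg N_G)$, so $P\lg\lg N_G=\OO((\lg\lg N_G)^2)=\OO(\lg N_G)$.

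\emph{Case 2: $k_1>(\lg N_G)^2$.} Here $\lg\lg N_G\leq \tfrac12\lg k_1$, so $P\lg\lg N_G=\OO((\lg k_1)^2)$. This quantity is at most $k_1/576$ once $k_1$ exceeds some absolute constant $C$. If instead $k_1\leq C$, then $P$ is $\OO(1)$ and the product is $\OO(\lg\lg N_G)=\OO(\lg N_G)$.

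In both cases the product is $\OO(\lg N_G)$ plus at most $k_1/288$. There is a degenerate situation in which $N_G$ is tiny, or the span is empty with $N_G=0$. Then $\lg\lg N_G$ is $0$ by the $\max$ conventions, or is bounded by a constant absorbed into the $+1$ convention. The only care needed in part (i) is tracking these explicit constants.

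For (ii), I sum (i) over all groups. The $\OO(\lg N_G)$ terms sum to $\OO(\lg n)$ per delete-min by Lemma~\ref{L:delete-time-shift}(ii). Summed over groups, the $k_1$ terms give exactly the total number of pass-$1$ inner links, which is $I/288$.

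The one real subtlety is the group with an empty span but nonzero links. Such groups contain no temporary node when they start shrinking. Their $N_G=0$ makes the product $0$ under the $\lg$ convention, so they contribute nothing. The main obstacle I expect is therefore only verifying that the constant threshold in Case 2 is legitimate, so that a fixed constant $C$ handles $(\lg k_1)^2\leq k_1/576$ uniformly.
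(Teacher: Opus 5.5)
Your proof is correct, but it takes a different route from the paper. The paper derives the corollary from the proof of Lemma~\ref{L:delete-rank-increases} with $b=0$ and $c$ large. There, each outer link of pass $i$ is charged $\lg\lg N_G\le (2/2^{a_i})\lg N_G+a_i$ with $a_i=\max\{i-c-1,0\}$. For passes with $k_i\ge 2$, the $a_i$ terms are absorbed by the geometric decay $k_i-1\le (k_1-1)/2^{i-1}$ of Lemma~\ref{L:group-pairing-passes}(iv). The last few passes, with at most five links, contribute only $\OO(\lg N_G)$. So the paper interpolates pass by pass between $\lg N_G$ and $k_1$.

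You instead use only the pass-count bound of Lemma~\ref{L:group-pairing-passes}(vii), together with $n_1\le 2k_1+3$ from part (i), and one global case split on $k_1$ versus $(\lg N_G)^2$. This gives $\OO((\lg\lg N_G)^2)$ in one case and $\OO((\lg k_1)^2)$ in the other. Your argument is self-contained. It needs neither the inequality $\lg\lg N\le (2/2^a)\lg N+a$ from~\cite{SinnamonT25} nor a look inside another proof. It also yields the sharper bound $\OO(\lg N_G+(\lg(k_1+2))^2)$, which makes plain that the dependence on $k_1$ is only polylogarithmic. The paper's route costs it nothing extra, since that computation is already done for the rank-increase bound of Lemma~\ref{L:delete-rank-increases}.
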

\begin{proof}
The corollary follows from Lemma~\ref{L:delete-rank-increases} by setting $b=0$ and $c$ sufficiently large and observing that the bounds in the proof of the lemma include a term of $\lg\lg N_G$ for each outer link of a group $G$.
\end{proof}

\begin{theorem}\label{T:reduced-node-rank-changes} 
Let $c$ be any non-negative integer.  In any sequence of pure pairing heap operations, the sum of the increases in $b$-reduced node ranks is $\OO(\lg((\lg n)/2^b))$ per decrease-key plus $\OO(((b+c+1)\lg n + b+c+1)/2^b)$ per delete-min plus at most $8I/2^{b+c}$, where the constants in the big $\OO$ notation are independent of $b$, $c$, and the sequence of heap operations.  The same bound holds for the sum of the magnitudes of the decreases in $b$-reduced ranks.  
\end{theorem}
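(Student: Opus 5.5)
The plan is to classify every event that can change a $b$-reduced node rank, and then bound the total increase caused by each class using the lemmas above. Ranks change only when a node is inserted or deleted, when a link is done (meld, decrease-key, pairing or assembly link, including step-ups), or when a link is cut (by a decrease-key or a delete-min). Cuts and assembly links alter sizes and masses in a way that can only decrease ranks, or leave them unchanged, apart from the effects already charged in the analogous lemmas of~\cite{SinnamonT25}. Insertions start at rank $0$, and deletions end at rank $0$. So I will show that every rank \emph{increase} comes from one of three sources: real decrease-key links, real meld links, or pairing links. Phantom links do not change sizes or masses of real structure, since size counts only real-link paths and the mass of a phantom child equals its size. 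I will check this case by case, following the corresponding argument for multipass pairing heaps. The one new case is the group-restricted mass. A node leaving a group, or a group changing position, removes terms from masses but never adds them. The exception is a step-up, and step-ups are attributed to the link that triggers them.

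With this classification the increase bound is a sum. Lemma~\ref{L:decrease-key-rank-increases} gives $\lg((\lg n)/2^b)$ per decrease-key. Lemma~\ref{L:meld-rank-increases} gives $\lg((\lg n)/2^b)$ per decrease-key plus $\lg((\lg n)/2^b)+3/2^b$ per delete-min. Lemma~\ref{L:delete-rank-increases}(ii) gives $((6b+6c+22)/2^b)(3\lg n+3)$ per delete-min plus $8I/2^{b+c}$. To fit the stated form, I note that $\lg((\lg n)/2^b)\le \lg\lg n$ is itself $\OO(((b+1)\lg n+1)/2^b)$; a short calculation shows $2^b\lg\lg n\le \OO((b+1)\lg n)$ whenever the left side is nonzero, because it is nonzero only when $\lg n\ge 2^b$. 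The $3/2^b$ term is absorbed into $\OO((b+c+1)/2^b)$. All constants are visibly independent of $b$, $c$, and the sequence.

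For decreases, I will use the fact that every temporary node has $b$-reduced rank $0$ before insertion and after deletion, and that ranks are non-negative. Hence, over the lifetime of each node, the total magnitude of decreases equals the total of increases. Summing over nodes transfers the increase bound to decreases. Permanent nodes have no rank, so they contribute nothing.

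I expect the main obstacle to be the classification step. I must verify that no event other than the three listed sources can raise a $b$-reduced rank. The delicate cases are the assembly phase, which moves groups of roots under a new root and reorders them relative to old children, and the group rules (Rules 2--5), which change which siblings count toward a node's mass. For these I will argue directly from the mass definition. An assembly link, or a group transformation, keeps each group's internal order and membership, except that the new root is removed. Meanwhile, a real child's mass only counts real siblings in its own group to its right. Together these show the mass is unchanged or reduced, while the size is unchanged. If an assembly link to a temporary root turns out to raise masses, I will instead charge the increase per delete-min, using the $\lg N_G$ bounds of Lemma~\ref{L:delete-time-shift}(ii).
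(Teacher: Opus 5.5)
Your proposal is correct and matches the paper's proof. The increase bound is the sum of Lemmas~\ref{L:decrease-key-rank-increases}, \ref{L:meld-rank-increases}, and~\ref{L:delete-rank-increases}(ii), and the decrease bound follows because every $b$-reduced rank is $0$ before insertion and after deletion. Your explicit check that cuts, assembly links, and group changes never raise a rank (so your fallback in the last sentence is unnecessary), and your absorption of the per-delete-min meld term (which also follows directly from $\lg((\lg n)/2^b)\le(\lg n)/2^b$), spell out what the paper calls immediate.
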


\begin{proof}
This is the analogue for pure pairing heaps of Theorem 7.13 in~\cite{SinnamonT25}.  The first half of the theorem follows immediately from Lemmas~\ref{L:meld-rank-increases},~\ref{L:decrease-key-rank-increases}, and~\ref{L:delete-rank-increases}.  The second half follows from the observation that the sum of the magnitudes of the decreases in $b$-reduced node ranks equals the sum of the increases in $b$-reduced node ranks. 
\end{proof}

\begin{theorem}\label{T:node-rank-changes}  
In any sequence of pure pairing heap operations, the sum of the node-rank increases is $\OO(\lg\lg n)$ per decrease-key plus $\OO(\lg n)$ per delete-min plus $I/72$.  The same bound holds for the sum of the magnitudes of the node-rank decreases.   
\end{theorem}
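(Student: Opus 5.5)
The plan is to specialize Theorem~\ref{T:reduced-node-rank-changes} to $b=0$ and then pick $c$ large enough that the pass-$1$ inner-link term $8I/2^{b+c}$ is at most $I/72$. Since the $0$-reduced rank of a node is its rank, the theorem bounds the total node-rank increase by three terms. The decrease-key term is $\OO(\lg\lg n)$. The delete-min term is $\OO((c+1)\lg n + c+1)$. The last term is $8I/2^{c}$.

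We need $8/2^c \le 1/72$, that is, $2^c \ge 576$. So we take $c=10$, which gives $8I/1024 = I/128 \le I/72$. With $c$ now a fixed constant, the delete-min term $\OO((c+1)\lg n + c+1)$ collapses to $\OO(\lg n)$. This uses the paper's convention of an implicit ``$+1$'' inside each per-operation big-$\OO$, and the fact, stated in Theorem~\ref{T:reduced-node-rank-changes}, that its hidden constants do not depend on $b$, $c$, or the operation sequence. The decrease-key term with $b=0$ is $\OO(\lg\lg n)$, again with the implicit $+1$ absorbing the cases where $\lg\lg n$ is tiny or undefined for small $n$.

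For the decreases, we reuse the second half of Theorem~\ref{T:reduced-node-rank-changes}. Every temporary node starts with rank $0$ before insertion and returns to rank $0$ once deleted. Hence, for each node, the total of its rank decreases equals the total of its rank increases.

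The only point that needs care is permanent nodes: they are never deleted, and ranks are defined only for temporary nodes. A temporary node that is never deleted cannot occur, since temporary means eventually deleted. So every rank trajectory starts and ends at $0$, and the increase and decrease totals agree. Beyond this, the argument is pure parameter bookkeeping on top of Theorem~\ref{T:reduced-node-rank-changes}.
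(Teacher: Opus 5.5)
Your proposal is correct and follows the paper's proof: specialize Theorem~\ref{T:reduced-node-rank-changes} to $b=0$ and fix $c$ large enough, as the paper does. Your explicit $c=10$ gives $8I/2^{10}=I/128\le I/72$. The bound on decreases comes from the second half of that theorem, which rests on the fact you cite: every temporary node's rank starts and ends at $0$.
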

\begin{proof}
This theorem is the analogue for pure pairing heaps of Theorem 7.10 in~\cite{SinnamonT25}.  The theorem follows from Theorem~\ref{T:reduced-node-rank-changes} by setting $b=0$ and setting $c$ large enough.
\end{proof}

\section{Link ranks}\label{S:link-ranks}

The second key concept in the analysis of inner links is that of \emph{link ranks}.  Link ranks connect winner links with changes in node ranks.  This allows us to use the results in Section~\ref{S:node-ranks} to bound the number of winner links and hence the number of inner links.  In this section we define and study link ranks.

We use the following notation, an extension of that of~\cite{SinnamonT25}, for the rank of a node at certain times: If $vw$ is a link both of whose nodes are temporary, $v.r_{vw}$, $w.r_{vw}$, $v.r'_{vw}$, and $w.r'_{vw}$ denote the rank of $v$ and the rank of $w$ just before and just after $vw$ is done, respectively; if $vw$ is a real right inner or real left outer link, $w.r_{vw}$ is the rank of $w$ just after its mass steps up just before $vw$ is done.  We use the analogous notation for $b$-reduced ranks just before and just after a link.  We also denote by $n_{vw}$ the number of temporary nodes in the heap containing $v$ and $w$ just after link $vw$ is done. 

If $vw$ is a real link, its \emph{link rank} $vw.r$ is $w.r_{vw}-v.r'_{vw}$.

\begin{lemma}\label{L:non-negative-link-ranks} 
The link rank of a real %\xxyedit{non-assembly} 
link is non-negative.
\end{lemma}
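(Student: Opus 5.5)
The plan is to reduce the statement to an inequality between mass-to-size ratios and then verify it link type by link type. Let $vw$ be a real link, so $v$ and $w$ are temporary and $vw$ is a d-link. Since the loser's subtree is attached to the winner by a real link, $v.s' = v.s + w.s > w.s$, where primes denote values just after the link. Node rank is $\lg\lg(m/s)$ and $\lg$ is nondecreasing, so it suffices to show
\[
\frac{w.m}{w.s} \geq \frac{v.m'}{v.s'},
\]
with $w.m$ taken after any step-up. We will repeatedly use the fact that $(1+x)/x$ is decreasing in $x$. The cases follow the kind of link.

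\emph{Winner is a groupless root.} For a meld link or a decrease-key link, $v$ is a root not in a group after the link. The same holds for an assembly link: by Rule~4, the new root $v$ was removed from its group when it was found during the pairing pass. In all these cases $v.m' = v.s'$, so $v.r'_{vw} = 0 \le w.r_{vw}$.

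\emph{Pairing links.} Here we use that every node in a group of roots in the middle of a delete-min has mass at least one plus its size.
\begin{itemize}
\item Real right inner link. The step-up sets $w.m$ equal to $v.m'$. Since $w.s < v.s'$, the ratio inequality follows immediately.
\item Real left outer link. The step-up again makes $w.m = v.m'$, and the same argument applies.
\item Real left inner link. Here $w$ is immediately left of $v$ in the same group, and neither is the rightmost node of the group. By the definition of mass, $w.m = 1 + w.s + (\text{sum over nodes to the right of } w) = w.s + v.m$. Also $v$ keeps its position and absorbs $w$'s size, so $v.m' = v.m + w.s$. Hence $w.m = v.m'$, and we conclude as before.
\item Right outer link. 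There is no step-up, but the winner $v$ is the rightmost node of its group after the link. This holds whether the link joins two groups, in which case $v$ was already the rightmost of its group, or joins the two rightmost nodes of one group, in which case $v$ becomes rightmost. Therefore $v.m' = 1 + v.s'$. Meanwhile $w.m \geq 1 + w.s$, and $w.s < v.s'$ gives $(1+w.s)/w.s \geq (1+v.s')/v.s'$.
\end{itemize}

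The main obstacle is bookkeeping rather than any inequality. We must check that the enumerated cases exhaust all real links, and that the mass definitions (group boundaries, exclusion of the rightmost node, the step-up convention) give exactly $w.m = v.m'$ in the inner-left case. Particular care is needed at the edge where $v$ or $w$ is the second-rightmost node of its group, and where groupless nodes sit left of the groups on the root list. It may also be worth noting explicitly that the right outer case is precisely why the step-up is not needed there, while the left outer case is why the paper introduced it.
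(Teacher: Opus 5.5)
Your proposal is correct and is essentially the paper's argument. The paper defers the meld, decrease-key, and inner cases to Lemma 7.5 of Sinnamon and Tarjan, which is your reduction to $w.m/w.s \geq v.m'/v.s'$ (step-up for right inner links, $w.m = v.m'$ for left inner links), and it handles outer links as you do, except that in the right-outer case it just notes $v.r'_{vw}=0$ because $v$ is rightmost in its group after the link. Your treatment of assembly links (the winner is the new root, groupless by Rule~4, so it has rank $0$) covers a case the paper's proof never mentions, and the groupless edge case you flag cannot arise: the roots of a pairing pass are children of a deleted, hence temporary, node, and by induction on Rules~2 and~5 every child of a temporary node lies in a group.
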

\begin{proof}
This lemma is the analogue for pure pairing heaps of Lemma 7.5 in~\cite{SinnamonT25}, and its proof is the same for meld links, decrease-key links, and inner links.  We prove it for real outer links.

Let $vw$ be a real outer link.  If $w$ is to the right of $v$ before the link, the rank of $v$ is $0$ after the link, making the lemma true.  If $w$ is to the left of $v$ before the link, the step-up in the mass of $w$ before the link sets $w.m_{vw} \geq v.m_{vw}'$.  Hence $w.m_{vw}/w.s_{vw} \geq v.m_{vw}'/(v.s_{vw}+w.s_{vw})=v.m'_{vw}/v.s'_{vw}$, making the lemma true in this case also.  \end{proof}

The following lemma is the key to our analysis, as it is to the analysis of~\cite{SinnamonT25}.

\begin{lemma}\label{L:small-link-rank} 
Let $vw$ be a real inner pairing link with link rank less than $1$.  If both $v$ and $w$ have positive rank just after $vw$ is done, then doing $vw$ decreases the rank of $w$ by at least $1$. 
\end{lemma}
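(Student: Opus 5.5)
We would prove the lemma by a direct computation. First we express the relevant masses just before and just after $vw$ in terms of sizes. Then we translate the hypothesis ``link rank less than $1$'' into an inequality among sizes and masses. Finally we show that this same inequality forces the rank of $w$ to drop by at least $1$. This follows the analogous lemma in~\cite{SinnamonT25}. The only new features are the group-restricted definition of mass and the fact that the new group of $w$ is the leftmost group of children of $v$.

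\textbf{Step 1: the mass of $w$ just before the link equals the mass of $v$ just after.} Since $vw$ is inner, $v$ and $w$ are in the same group $G$ and neither is the rightmost node of $G$. Let $R$ be the sum of the sizes of the temporary nodes of $G$ to the right of both $v$ and $w$, excluding the rightmost node. Then $v.m'_{vw}=1+v.s_{vw}+w.s_{vw}+R$, because the size of $v$ grows by $w.s_{vw}$ and the nodes to its right are unchanged.
\begin{itemize}
\item If $vw$ is a right link ($w$ to the right of $v$), the step-up sets $w.m_{vw}=v.m'_{vw}$ by definition.
\item If $vw$ is a left link ($w$ immediately to the left of $v$), then $w.m_{vw}=1+w.s_{vw}+v.s_{vw}+R$ directly, since $v$ is not rightmost.
\end{itemize}
In both cases $w.m_{vw}=v.m'_{vw}=:M$. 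Because the link is real, $v$ and $w$ are temporary and $v.s'_{vw}=v.s_{vw}+w.s_{vw}$. Write $s_v=v.s_{vw}$ and $s_w=w.s_{vw}$; the size of $w$ does not change.

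\textbf{Step 2: bound the mass of $w$ just after the link.} By Rule 5, $w$ joins the leftmost group of children of $v$ as its leftmost member. So $w.m'_{vw}$ is at most $1+s_w$ plus the sum of the sizes of the real children of $v$ in that group. These children's subtrees are disjoint sets of temporary descendants of $v$ joined to $v$ by real links, and none contains $v$, so their total size is at most $s_v-1$. Hence $w.m'_{vw}\le s_v+s_w$.

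\textbf{Step 3: compare the two conditions.} Both $v$ and $w$ have positive rank after the link. Therefore the arguments of the outer $\lg$ exceed $1$, and the $\max\{\cdot,1\}$ truncation in $\lg$ plays no role. We need to verify this carefully: it is exactly where the positivity hypothesis is used, and it is the main point to get right. In particular, positivity of $w$'s rank after the link keeps $w.r'_{vw}$ in the untruncated regime, so the comparison below is exact.
\begin{itemize}
\item Link rank less than $1$ means $\lg\lg(M/s_w)<1+\lg\lg(M/(s_v+s_w))$, i.e. $\lg(M/s_w)<2\lg(M/(s_v+s_w))$. This is equivalent to $(s_v+s_w)^2< M s_w$.
\item By Step 2, $w.r'_{vw}\le \lg\lg((s_v+s_w)/s_w)$, while $w.r_{vw}=\lg\lg(M/s_w)$.
\item So it suffices to show $\lg(M/s_w)\ge 2\lg((s_v+s_w)/s_w)$, i.e. $M s_w\ge (s_v+s_w)^2$.
\end{itemize}
This is the reverse inequality to the one in the first item, and the first item's inequality is strict. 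Since $(s_v+s_w)^2<Ms_w$ gives in particular $Ms_w\ge(s_v+s_w)^2$, the required inequality holds, and so the rank of $w$ decreases by at least $1$.

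The one remaining point to check is that the step-up for a right link is counted in $w.r_{vw}$ as the notation prescribes. It is, so the decrease of at least $1$ is measured from the stepped-up rank.
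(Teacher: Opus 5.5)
Your proof is correct and follows essentially the same route as the argument the paper imports from Lemma~7.8 of \cite{SinnamonT25}: show $w.m_{vw}=v.m'_{vw}$ (by the step-up for right links, directly for left links), bound $w.m'_{vw}\le v.s'_{vw}$ using the group-restricted mass and Rule~5, and note that the link-rank hypothesis $(s_v+s_w)^2<Ms_w$ is a strict form of exactly the inequality the rank drop of $w$ needs (it is the same inequality, not the reverse one as your write-up says). The untruncated-regime check that you flag but do not carry out takes one line: positivity of $v.r'_{vw}$ gives $\log(M/(s_v+s_w))>1$, and positivity of $w.r'_{vw}$ gives $\log(w.m'_{vw}/s_w)>1$, hence $\log((s_v+s_w)/s_w)>1$ by Step~2, so none of the double logarithms involved is truncated.
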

\begin{proof}
This is the analogue for pure pairing heaps of Lemma 7.8 in~\cite{SinnamonT25}, and its proof is the same.
\end{proof}

The following corollary is a more convenient form of this lemma:

\begin{corollary}\label{C:small-link-rank}
Let $vw$ be a real inner pairing link with link rank less than $1$.  If both $v$ and $w$ have rank at least $1$ just before $vw$ is done, then doing $vw$ decreases the rank of at least one of $v$ and $w$ by at least $1$. 
\end{corollary}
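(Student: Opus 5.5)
This follows from Lemma~\ref{L:small-link-rank} by a case split on whether both $v$ and $w$ have positive rank just after $vw$ is done.

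\emph{Case 1: both ranks are positive just after the link.} Then Lemma~\ref{L:small-link-rank} applies directly, and the rank of $w$ decreases by at least $1$. Strictly, the lemma measures the decrease from $w.r_{vw}$, which is the rank after any step-up. A step-up only increases the mass of $w$ and leaves its size unchanged. So the rank before the step-up is at most $w.r_{vw}$, and the decrease measured from the true pre-link rank is at least as large. Either reading gives the corollary.

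\emph{Case 2: at least one of $v$ and $w$ has rank $0$ just after the link.} Suppose first that $w$ has rank $0$ afterwards. Its rank before the link was at least $1$ by hypothesis, and a step-up cannot lower it. Hence the rank of $w$ drops by at least $1$.

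Now suppose $v$ has rank $0$ afterwards. We only need $v.r_{vw}\ge 1$, which the hypothesis gives, so the rank of $v$ drops by at least $1$. Ranks here are $\lg\lg(m/s)$ truncated at $0$ through the $\lg$ convention, so "rank $0$" is the same as "rank at most $0$", and the drop is at least $1-0=1$.

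Notice that Case 2 does not even use the link-rank hypothesis. The only delicate point is consistency of notation: the corollary speaks of the rank just before the link. For $w$ I would take this to be its rank before the step-up, and the monotonicity of the step-up noted in Case 1 handles it. The main obstacle is merely this bookkeeping about step-ups, and it resolves because step-ups raise mass with size fixed, so they never decrease rank.
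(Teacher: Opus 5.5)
Your case split is the paper's argument, run directly rather than by contradiction. The paper assumes neither rank drops by $1$, concludes that both ranks stay positive after the link (they started at $\ge 1$), and then Lemma~\ref{L:small-link-rank} forces the rank of $w$ to drop by $1$. Your Case~2 and the direct appeal to the lemma in Case~1 are exactly this.

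The step-up aside, however, has the inequality backwards. A step-up raises the mass of $w$ with its size fixed, so the pre-step-up rank $r_0$ satisfies $r_0 \le w.r_{vw}$. Hence $r_0 - w.r'_{vw} \le w.r_{vw} - w.r'_{vw}$. The decrease measured from the pre-step-up rank is therefore \emph{at most}, not at least, the decrease the lemma guarantees. Under the reading you say you would adopt, Case~1 does not follow from the lemma, and the claim that ``either reading gives the corollary'' is unjustified.

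The fix is to use the paper's convention from Section~\ref{S:link-ranks}. For a real right inner link, ``the rank of $w$ just before $vw$'' \emph{is} $w.r_{vw}$, the rank after the step-up, in both the hypothesis and the conclusion. This is also the reading that Theorem~\ref{T:inner-link-bound} needs. There the step-up is one of the rank increases bounded in Section~\ref{S:node-ranks}, and the subsequent drop from $w.r_{vw}$ is what Theorem~\ref{T:node-rank-changes} pays for. With this convention, Case~1 is the lemma verbatim and no step-up bookkeeping is needed; Case~2 is unaffected.
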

\begin{proof}
Suppose doing $vw$ does not decrease the rank of either $v$ or $w$ by at least one.  Then both $v$ and $w$ have positive rank just after the link.  But then by Lemma~\ref{L:small-link-rank}, doing $vw$ decreases the rank of $w$ by at least $1$.
\end{proof}

Corollary~\ref{C:small-link-rank} plus the results in Section~\ref{S:node-ranks} will allow us to derive a bound on the number of early-pass, small-rank winner links.  We bound the number of large-rank winner links by bounding the sum of the ranks of all the winner links.  These results, together with the assumption $W \geq I/3$, where $W$ is the total number of winner links and $I$ is the total number of pass-$1$ inner links, give us a bound on $W$, and hence on the number of pass-$1$ inner links, completing our analysis.

To obtain a bound on the sum of the link ranks of the winner links, we use a crucial equality from~\cite{SinnamonT25}, the \emph{link-rank} equality.  Let $vw$ be a real inner link.  Let $u$ be the initial parent of the group containing $v$ and $w$ when $vw$ is done.  The delete-min that deletes $u$ cuts real links $uv$ and $uw$.  The link-rank equality for $vw$ is:

\[v.r_{uv}-u.r'_{uv}=(v.r'_{vw}-u.r'_{uw})+(v.r_{uv}-v.r'_{vw})+(u.r'_{uw}-u.r'_{uv})\] 

The equality follows by canceling equal terms on the right side. The left side of the link-rank equality is the link rank of $uv$.  We call the three terms in parentheses on the right side the $w$-\emph{shift}, $v$-\emph{shift}, and $u$-\emph{shift} of $vw$, respectively.  

If $uv$ is a winner link, there is a pass-$2$ real inner link $vw$ such that $u$ is the initial parent of the group containing $v$ and $w$ when $vw$ is done.  Thus, if we sum the link rank equality over all real inner links $vw$, the left side includes the link rank of each winner link at least once.  This allows us to bound the sum of the link ranks of the winner links by bounding the sums of the $w$-shifts, the $v$-shifts, and the $u$-shifts of all the real inner links.  This is what we shall do.

\begin{lemma}\label{L:w-shifts}  
In a sequence of pure pairing heap operations, the sum of the $w$-shifts of all the real inner links is $\OO(\lg\lg n)$ per decrease-key plus $\OO(\lg n)$ per delete-min plus $I/288$. 
\end{lemma}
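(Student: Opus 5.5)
The plan is to telescope the $w$-shifts along the lifetime of each node $u$ as a parent. Fix a real inner link $vw$, and let $u$ be the initial parent of the group $G$ containing $v$ and $w$. The $w$-shift is $v.r'_{vw}-u.r'_{uw}$. Both links $uv$ and $uw$ are real links won by $u$. Since $w$ is to the left of $v$ in $G$ when $vw$ is done, the earlier of these two links is $uv$.

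The idea, as in the treatment of $w$-shifts in~\cite{SinnamonT25}, is to associate each real inner link $vw$ with the consecutive pair of real children of $u$ consisting of $w$ and the child just to its right. The $w$-shift then compares the rank of $u$ right after it won the link to $w$ with the rank of $v$ right after it wins $vw$. After $u$ wins $uw$, the mass $u.m$ together with the subsequent links into $u$ determines the masses of the children. I would argue, exactly as in the multipass analysis, that $v.r'_{vw}\le u.r'_{uw}$ up to correction terms. There are two sources of correction.

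\textbf{(a) Rank increases of $u$ and of the nodes in $G$.} These come from meld, decrease-key and outer links. Rank changes caused by meld and decrease-key links are charged via Lemmas~\ref{L:decrease-key-rank-increases} and~\ref{L:meld-rank-increases}, giving $\OO(\lg\lg n)$ per decrease-key and $\OO(\lg n)$ per delete-min.

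\textbf{(b) Mass spilling in from outside the group.} This is exactly what the group-restricted definition of mass controls: the mass of a node only counts sizes within its group, excluding the rightmost node.

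Concretely, I would sum the $w$-shifts over all inner links of a fixed group $G$ and show that the sum telescopes into a sum of node-rank decreases and increases within $G$. Every positive $w$-shift must be paid for by an increase in $b$-reduced ranks caused by the links of $G$. The terms that the multipass argument does not cover are the contributions of the at most two outer links per pass of $G$, including their step-ups, and each such contribution is at most $\lg\lg N_G$. Summing these over passes, Corollary~\ref{C:outer-link-rank-bound}(i) bounds the total by $\OO(\lg N_G)$ plus $k_1/288$. Summing over groups with Corollary~\ref{C:outer-link-rank-bound}(ii) and Lemma~\ref{L:delete-time-shift}(ii) gives $\OO(\lg n)$ per delete-min plus $I/288$. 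The remaining inner-link and cross-heap contributions are absorbed using Theorem~\ref{T:node-rank-changes}, or more precisely Theorem~\ref{T:reduced-node-rank-changes} with $b=0$ and $c$ chosen large enough that its $I$ term fits. If the inner-link part needs its own share of $I$, I would split the budget $I/288$, using $c$ large for the inner-link part and the corollary for the outer part.

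The main obstacle is the telescoping step. Specifically, I need to show that the only discrepancies between $v.r'_{vw}$ and $u.r'_{uw}$ beyond the multipass case come from outer links and step-ups at the group boundary, each bounded by $\lg\lg N_G$ per outer link. This requires checking that the sizes of nodes to the right of $v$ in $G$ at the time of $vw$ are dominated by what was to the right of $w$ under $u$ after $uw$. The check must use that groups stay contiguous and keep their order through Rule 2, and that the rightmost node is excluded from masses. I would first attempt this domination directly from the definitions of mass and size, treating the rightmost node separately.
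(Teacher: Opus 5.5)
Your plan rests on showing that $v.r'_{vw}\le u.r'_{uw}$ up to corrections from outer links and step-ups at group boundaries. That step is false, and no mass-domination argument inside $G$ can supply it. The two quantities are ranks of different nodes at different times, and a rank is $\lg\lg$ of the ratio $m/s$.

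For example, if $uw$ is a meld or decrease-key link, $u$ is a groupless root just after it, so $u.r'_{uw}=0$. After $u$ is deleted, $w$ can lose a real inner link $vw$ in which $v$ and $w$ are small but the nodes to the right of $v$ in the group are large. Then $v.r'_{vw}$ is of order $\lg\lg n$, with no outer link anywhere.

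Your proposed domination (what lies to the right of $v$ in $G$ was below $u$ when $uw$ was done) compares $v.m'_{vw}$ with $u.s'_{uw}$. It does not compare the ratio $v.m'_{vw}/v.s'_{vw}$ with $u.m'_{uw}/u.s'_{uw}$, and $v.s'_{vw}$ may be much smaller than $u.s'_{uw}$. Paying for such discrepancies link by link would cost about $\lg\lg n$ per meld link, i.e.\ per insertion, which the lemma does not allow.

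Summing over the children of $u$, or over the inner links of a fixed group, does not telescope either. The terms $u.r'_{uw}$ are ranks of $u$ at different times, and the terms $v.r'_{vw}$ belong to different nodes. Also, $w$ need not be to the left of $v$, nor adjacent to it among the children of $u$. The loser of a pairing link can be on either side, and pass-$2$ partners were not adjacent children. Finally, Theorem~\ref{T:reduced-node-rank-changes} bounds changes of one node's rank over disjoint intervals. That is the tool for $v$- and $u$-shifts, not for $w$-shifts.

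The missing idea is to telescope along the history of $w$. When $w$ loses $uw$ it joins the leftmost group of children of $u$ (Rule~5). It stays in that group, losing only assembly links, until it loses $vw$ (Rule~4). So $uw$ is exactly the previous non-assembly link lost by $w$. Take a maximal run $v_0w,v_1w,\dots,v_kw$ of consecutive non-assembly links lost by $w$ in which every link after the first is a real inner link. The $w$-shifts along the run sum to $v_k.r'_{v_kw}-v_0.r'_{v_0w}\le\lg\lg n_{v_kw}$.

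This leaves one term $\lg\lg n_{vw}$ per real inner link $vw$ whose successor link lost by $w$ is not real inner. That term is charged to the event that removes $w$ from its new group, which is one of three things:
\begin{itemize}
\item a real outer link. This is the only place Corollary~\ref{C:outer-link-rank-bound} is used, via $\lg\lg n_{vw}\le\lg\lg N_G$, and it is where the $I/288$ term comes from.
\item a k-link.
\item the delete-min that makes $w$ the new root.
\end{itemize}
The last two cases need time-shifting credit arguments, because $n_{vw}$ can exceed the heap size at the charged operation. For k-links, allocate $\lg\lg n$ credits per decrease-key plus $3$ per delete-min, spread as $3/(n-1)$ over the surviving nodes. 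For new roots, allocate $\lg n+1.5$ per delete-min, with the $1.5$ spread likewise. Lemmas~\ref{L:decrease-key-rank-increases} and~\ref{L:meld-rank-increases}, which you invoke, bound rank increases caused by links and do not address these terms.
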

\begin{proof}
This is the analogue for pure pairing heaps of Lemma 6.2 in~\cite{SinnamonT25}.  Its proof is the same except that we need to account for the effect of outer links.

Let $w$ be a temporary node.  Consider the sequence of non-assembly links lost by $w$.  Let $v_0w, v_1w,\dots, v_kw$ be a maximal consecutive subsequence of these links such that each one except $v_0w$ is a real inner link.  For $1 \leq i \leq k$, the $w$-shift of $v_iw$ resulting from doing $v_iw$ and earlier cutting $v_{i-1}w$ is $v_i.r'_{v_iw}-v_{i-1}.r'_{v_{i-1}w}$.  The sum of these $w$-shifts telescopes to $v_k.r'_{v_kw}-v_0.r'_{v_0w} \leq v_k.r'_{v_kw}\leq \lg\lg n_{v_kw}$.  It follows that the sum of the $w$-shifts of the real inner links over a sequence of pure pairing heap operations is at most the sum of $\lg\lg n_{vw}$ over all real inner links $vw$ such that the next non-assembly link lost by $w$ after $vw$, if any, is not a real inner link.  We bound this sum by charging each term to an appropriate event.

Let $vw$ be a real inner link such that the next non-assembly link lost by $w$ after $vw$, if any, is not a real inner link.  \xxydelete{Let $u$ be the initial parent of $v$ and $w$.}  The link $vw$ removes $w$ from the group previously containing $v$ and $w$ and adds $w$ to the growing group of children of $v$, say group $G$.  We consider the history of $w$ in its new group $G$.  Since $w$ is not deleted from $G$ by losing a real inner link, there are only three ways in which $w$ can be deleted from $G$: by losing a real outer link, by losing a k-link, or by becoming the new root at the end of a delete-min.  In each of the first two cases, we charge $\lg\lg n_{vw}$ to the link that removes $w$ from $G$.  In the third case, we charge $\lg\lg n_{vw}$ to the delete-min that makes $w$ the new root.  Each link and each delete-min is charged at most once.

To bound the total charges, we consider each of the three cases separately.  Suppose a real outer link $xw$ of group $G$ is charged for a link $vw$.  Then $G$ has non-empty span, since $xw$ is done while $G$ is shrinking.  Hence $vw$ is done during the span of $G$, which implies that $\lg\lg n_{vw}\leq \lg\lg N_G$.  By Corollary~\ref{C:outer-link-rank-bound}, the total charge to real outer links is $\OO(\lg n)$ per delete-min plus at most $I/288$.

We bound the sum of charges to k-links by shifting some of these charges to decrease-key and delete-min operations, using a method like that in the proof of Lemma~\ref{L:delete-time-shift}. We allocate $\lg\lg n$ credits to each decrease-key and $3$ credits to each delete-min, and show that the total number of credits is enough to pay for the charges to k-links.  

We apply the credits assigned to a decrease-key to partially or fully pay for the charge corresponding to the k-link cut by the decrease-key.  This leaves $3$ credits per delete-min to pay for any unpaid charges to k-links.  The unpaid charge of a k-link $xw$ is $\lg\lg n_{vw}-\lg\lg n$, where $vw$ is the link lost by $w$ preceding its loss of $xw$, and $n$ is the \xxyreplace{size of}{number of temporary nodes in} the heap containing $w$ just before the decrease-key that cuts $xw$.  The unpaid charge is positive only if $n < n_{vw}$.

When a delete-min occurs, it allocates its $3$ extra credits equally to the remaining nodes in the heap, $3/(n-1)$ to each.  From the time $vw$ is done until the time $xw$ is cut, $w$ accumulates at least $3\sum_{i=n}^{n_{vw}-1}1/i \geq 2(\lg n_{vw}-\lg n) \geq \lg\lg n_{vw}-\lg\lg n$ credits, enough to cover the unpaid charge to $xw$.  The time interval during which $w$ accumulates these credits is disjoint from the interval for any other such pair of links $v'w$, $x'w$, so the sum of the allocated credits is an upper bound on the total charges.

We bound the sum of charges to delete-mins by doing a similar shifting.  We allocate $\lg n+1.5$ credits to each delete-min, and show that the total number of credits is enough to pay for the charges to delete-mins.  We denote by $n_x$ the number of \xxyedit{temporary} nodes in the heap containing $x$ just before $x$ is deleted.

We apply $\lg n_x$ of the credits assigned to the delete-min of a node $x$ to partially or fully pay for the charge to this delete-min.  This leaves $1.5$ credits per delete-min to cover any unpaid charges to delete-mins.  Suppose the delete-min of a node $x$ is charged $\lg\lg n_{vw}\leq \lg n_u$, where $u$ is the node whose deletion does link $vw$.  The unpaid charge to the delete-min of $x$ is at most $\lg n_u-\lg n_x$.  This is positive only if $n_u > n_x$.

When a delete-min occurs, it allocates its $1.5$ extra credits equally to the remaining nodes in the heap, $1.5/(n-1)$ to each.  During the interval of the time from the deletion of $u$ until the deletion of $x$, $w$ accumulates at least $\sum_{i=n_x}^{n_u-1} 1.5/i \geq \lg n_u-\lg n_x$ credits, enough to cover the unpaid charge to the delete-min of $x$.  During this interval $w$ is a root only in the middle of a delete-min.  Since $w$ is the new root after $x$ is deleted, the time interval during which $w$ accumulates these extra credits is disjoint from any time interval during which $w$ accumulates extra credits for another delete-min. Thus the sum of the allocated credits is an upper bound on the total charges to nodes.

The lemma follows by combining the bounds on the charges to outer links, k-links, and nodes.
\end{proof}

\begin{lemma}\label{L:v-shifts} 
In a sequence of pure pairing heap operations, the sum of the $v$-shifts of all the real inner links is $\OO(\lg\lg n \cdot \lg\lg\lg n)$ per decrease-key plus $\OO(\lg n)$ per delete-min plus at most %$I/216$.
$I/288$.
\end{lemma}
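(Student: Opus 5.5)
The $v$-shift of a real inner link $vw$ is $v.r_{uv}-v.r'_{vw}$, where $uv$ is the real link cut when the initial parent $u$ of the group containing $v$ and $w$ is deleted. It is the net change in the rank of $v$ between just before $v$ loses $uv$ and just after $v$ wins $vw$. I would follow the proof of the analogous lemma (Lemma 6.3) in~\cite{SinnamonT25}. The plan is to write each $v$-shift as a sum of the node-rank changes of $v$ over this interval, and split them into decreases, which only help the bound, and increases, which must be paid for. Since $v$ stays in its group throughout the interval, the increases come from four sources: step-ups and links lost by $v$ (the first is the loss of $uv$ itself), rank changes of $v$ caused by links won by $v$ or by other nodes of its group, the cut of $uv$ at the delete-min of $u$, and decrease-key cuts that remove real children of $v$ or nodes to its right in its group. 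The intervals for different links $vw$ won by the same $v$ are disjoint, since $v$ wins only one pass-$2$ link per group. Hence the total $v$-shift is at most the sum, over these disjoint intervals, of the rank increases of $v$.

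For increases caused by pairing and meld links I would use the $b$-reduced rank machinery. Following~\cite{SinnamonT25}, I would bound the $v$-shift by $\sum_{b\ge 0}$ of the $b$-reduced rank increases of $v$ weighted suitably. More precisely, the $v$-shift is at most $1$ plus the increase in $v.r(b)$ over the interval, where $b$ is chosen according to the current rank of $v$. This lets me apply Theorem~\ref{T:reduced-node-rank-changes} for each $b$ from $0$ up to about $\lg\lg\lg n$ and sum the results. Choosing $c$ large (growing with $b$ if needed) makes the $8I/2^{b+c}$ terms sum to at most $I/288$. The delete-min terms are $\OO(((b+c+1)\lg n)/2^b)$, which is summable in $b$ and gives $\OO(\lg n)$. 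The decrease-key terms are $\OO(\lg((\lg n)/2^b))$ for each of $\OO(\lg\lg\lg n)$ values of $b$, which gives $\OO(\lg\lg n\cdot\lg\lg\lg n)$ per decrease-key. That is exactly where the extra $\lg\lg\lg n$ factor in the statement comes from.

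The new ingredient relative to~\cite{SinnamonT25} is the effect of outer links and of the grouping on the mass of $v$. Mass now counts only sizes within the group, excluding the rightmost node, so mass can change when an outer link removes the rightmost node of the group or causes a step-up. I would check these cases one at a time. An outer link touches only the two rightmost nodes or nodes in adjacent groups, so it changes the rank of at most two nodes per pass per group. Each such change is at most $\lg\lg N_G$, so the total is bounded by Corollary~\ref{C:outer-link-rank-bound} as $\OO(\lg n)$ per delete-min plus $I/288$. The time-shifting of $n$ to $N_G$ is handled by Lemma~\ref{L:delete-time-shift}(ii). Because the final target is $I/288$ overall, the constants $c$ must be chosen so that the reduced-rank contributions and the outer-link contribution together stay below $I/288$. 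This may mean using Corollary~\ref{C:outer-link-rank-bound} with a smaller constant, which is available by taking $c$ larger.

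I expect the main obstacle to be the middle step: justifying that the $v$-shift is controlled by increases of a single appropriately reduced rank, so that Theorem~\ref{T:reduced-node-rank-changes} can be summed over $b$ without losing more than a $\lg\lg\lg n$ factor. This is delicate because the rank of $v$ can drop and later rise across different reduction levels within one interval. My first attempt would be to classify each interval by the value of $\lfloor \lg v.r_{uv}\rfloor$. An interval in class $b$ then gains at most a constant factor times the increase in $v.r(b-1)$, plus $\OO(1)$ per interval, and the $\OO(1)$ terms would be charged through Lemma~\ref{L:small-rank-nodes} and to winner and pairing-link counts already bounded.
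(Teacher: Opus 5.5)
Your proposal has a genuine gap at the disjointness step. You claim that the $v$-shift intervals of the links won by a fixed $v$ are disjoint ``since $v$ wins only one pass-$2$ link per group.'' But the lemma concerns all real inner links, of every pass.

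After $u$ is deleted, $v$ can stay in the group $G$ with initial parent $u$ through many passes, since losing an assembly link to a temporary root does not remove it. It can therefore win a real inner link of $G$ in pass $1$, pass $2$, pass $3$, and so on. All of these links share the same $uv$, so their $v$-shift intervals all start just before $uv$: they are nested, not disjoint. Disjointness holds only among pass-$i$ links for a fixed $i$.

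Classifying intervals by $\lfloor \lg v.r_{uv}\rfloor$ does not repair this. Nested intervals of the same $v$ have the same $v.r_{uv}$, so they land in the same class, and a rank decrease early in the interval is counted once for every later pass in which $v$ wins. Conversely, splitting by pass and applying Theorem~\ref{T:reduced-node-rank-changes} with a fixed $b$ in every pass would multiply the decrease-key term by the number of passes, which can be logarithmic in $n$.

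The missing idea is a reduction parameter that depends on the pass index. The paper uses:
\begin{itemize}
\item $b_i=0$ for $i\le c$ and $b_i=2^{\lfloor (i-c-1)/2\rfloor}$ otherwise;
\item the bound $v.r_{uv}(b_i)-v.r'_{vw}(b_i)+b_i$ on each non-negative pass-$i$ $v$-shift;
\item the second half of Theorem~\ref{T:reduced-node-rank-changes}, applied separately for each pass, plus a payment of $b_i$ per pass-$i$ inner link.
\end{itemize}
The payment totals at most $24k_1/2^c+2$ per group, because $k_i-1\le (k_1-1)/2^{i-1}$ (Lemma~\ref{L:group-pairing-passes}(iv)) decays faster than $b_i$ grows. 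The $I$-terms sum to $8(c+2)I/2^c$. The $\lg\lg\lg n$ factor arises because $\lg((\lg n)/2^{b_i})=0$ once $i>2\lg\lg\lg n+c+2$, so only $\OO(\lg\lg\lg n+c)$ passes contribute to the decrease-key term. It does not come from the number of rank classes.

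There are further errors.
\begin{itemize}
\item The $v$-shift $v.r_{uv}-v.r'_{vw}$ is the net \emph{decrease} in the rank of $v$. It is therefore bounded by the decreases, via the second half of Theorem~\ref{T:reduced-node-rank-changes}, and increases only help. Your assertion that the $v$-shift is at most the sum of the rank increases of $v$ is false.
\item The correct comparison is $v.r_{uv}-v.r'_{vw}\le v.r_{uv}(b)-v.r'_{vw}(b)+b$, with additive term $b$, not $1$.
\item Your fallback of charging ``$\OO(1)$ per interval'' elsewhere cannot work. A fixed additive constant per real inner link gives a term proportional to the number of real inner links, which can be of order $I$ and cannot fit under $I/288$. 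Lemma~\ref{L:small-rank-nodes} covers only links with an endpoint of rank at most $1$, one pass at a time.
\item Your separate treatment of outer links is unnecessary: their effect on ranks is already included in Theorem~\ref{T:reduced-node-rank-changes} through Lemma~\ref{L:delete-rank-increases}.
\item Your claim that an outer link changes only two ranks is not accurate. When the rightmost node of a group leaves, a different node becomes the excluded rightmost node, which lowers the masses of the other nodes in the group.
\end{itemize}
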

\begin{proof}
This lemma is the analogue for pure pairing heaps of Lemma 7.14 in~\cite{SinnamonT25}, and the proof is almost the same, except that we must account for groups.  Let $vw$ be a real inner link, and let $u$ be the initial parent of $v$ and $w$.  The $v$-shift of $vw$ is $v.r_{uv} - v.r'_{vw}$, the net decrease in the rank of $v$ from just before $uv$ is done until just after $vw$ is done.  We call this interval the \emph{v-shift interval} of $vw$.  If $vw$ is a pass-$i$ link, $v$ cannot win another pass-$i$ link until losing a link.  Thus the $v$-shift intervals for the pass-$i$ real inner links won by $v$ are disjoint.  This allows us to apply the second half of Theorem~\ref{T:reduced-node-rank-changes} to bound the sum of the $v$-shifts of all pass-$i$ links for a given $i$.  Summing over $i$ gives the lemma.

When summing, we ignore negative $v$-shifts.  If the $v$-shift of $vw$ is non-negative, it is $v.r_{uv} - v.r'_{vw}\leq v.r_{uv}(b)- v.r'_{vw}(b)+ b$ for any non-negative integer $b$, where the inequality follows from Lemma 7.12 in~\cite{SinnamonT25} and can be proved by a simple calculation.  Given a group $G$, let $k_i$ be the number of inner pass-$i$ links of $G$.  By Lemma~\ref{L:group-pairing-passes}~(iv), $k_i\leq k_1/2^{i-1}+1$.  We bound the sum of the non-negative $v$-shifts of real pass-$i$ inner links by using the second half of Theorem~\ref{T:reduced-node-rank-changes} with appropriately chosen $b_i$, and adding the sum of $b_ik_i$ over all groups.  We obtain the lemma by summing the bounds for each $i$ over all $i$ and choosing $c$ large enough.

Let $c$ be an arbitrary non-negative integer.  Let $b_i=0$ for $i \leq c$, $b_i=2^{\lfloor (i-c-1)/2 \rfloor}$ for $i > c$. Given a group $G$, let $k_i$ be the number of inner pass-$i$ links of $G$; $k_i=0$ if there is no pass $i$ of $G$.  Let $j$ be the maximum index such that $k_j \geq 2$; $j=0$ if there is no $k_i \geq 2$.  \xxyreplace{After pass $j$}{Before pass $j+1$}, $G$ contains at most five nodes; \xxyreplace{after pass $j+1$}{before pass $j+2$}, $G$ contains at most three nodes.  It follows that $k_{j+2}\leq k_{j+1}\leq 1$, and $k_i=0$ for $i\geq j + 3$.  For $1 \leq i \leq j$, $k_i \leq 2(k_i-1)\leq k_1/2^{i-2}$.  It follows that 
\[\sum_{i=1}^j {b_ik_i} \leq \sum_{i=1}^j{b_i k_1/2^{i-2}} \leq 4k_1\sum_{i=c+1}^\infty b_i/2^i \leq 8k_1/2^c\].
% There need not exist a j with k_j >= 2. In that case the corresponding part is simply 0, and one can take j = 0 in this case. It is also a generalization that if pass-j does not exist for j >= 1 then k_j = 0. Every reader is supposed to understand it without being explict. Making it explicit will significantly add unnecessary verbosity to the proof.  Bob: I made it explicit.

To this bound we must add $b_{j+1}+b_{j+2}$ to account for the at most two passes with $k_i=1$.  We consider two cases.  If $j>c$, then $4b_j\geq b_{j+1}+b_{j+2}$.  Since $\sum_{i=1}^j {b_ik_i}\geq 2b_j$, $b_{j+1}+b_{j+2}\leq 16k_1/2^c$, so the sum of $b_ik_i$ over all $i$ is at most $24k_1/2^c$.  On the other hand, if $j\leq c$, then $b_{j+1}+b_{j+2}\leq 2$.  Hence in both cases the sum of $b_ik_i$ over all $i$ is at most $24k_1/2^c$ plus $2$.  The sum of this term over all groups is at most $24I/2^c$ plus $4$ per delete-min.

The term in $I$ in the bound of Theorem~\ref{T:reduced-node-rank-changes} is $8I/2^{b_i+c}$.  The sum over $i$ of this term is at most $8(c+2)I/2^c$.  The bound per decrease-key in Theorem~\ref{T:reduced-node-rank-changes} is $\OO(\lg((\lg n)/2^{b_i}))$.  The sum over $i$ of $\lg((\lg n)/2^{b_i})$ is $\OO(\lg\lg n \cdot (\lg\lg\lg n + c+1))$, since $i > 2\lg\lg\lg n+c+2$ implies $2^{b_i} \geq\lg n$.  The bound per delete-min in Theorem~\ref{T:reduced-node-rank-changes} is $\OO(((b_i+c+1)\lg n + b_i+c+1)/2^{b_i})$.  The sum over $i$ of $((b_i+c+1)\lg n + b_i+c+1)/2^{b_i})$ is $\OO((c+1)\xxyedit{^2}(\lg n+1))$.  Combining the bounds and choosing $c$ large enough gives the lemma.      
\end{proof}

\begin{lemma}\label{L:u-shifts}   
In a sequence of pure pairing heap operations, the sum of the $u$-shifts of all the real inner links is $\OO(\lg\lg n \cdot \lg\lg\lg n)$ per decrease-key plus $\OO(\lg n)$ per delete-min plus at most %$I/216$.
$I/288$.
\end{lemma}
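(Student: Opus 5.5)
The plan is to mirror the proof of Lemma~\ref{L:v-shifts}, which is itself the analogue of the corresponding $u$-shift lemma in~\cite{SinnamonT25}.

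Let $vw$ be a real inner link, and let $u$ be the initial parent of the group containing $v$ and $w$. The $u$-shift of $vw$ is $u.r'_{uw}-u.r'_{uv}$. This is the net decrease in the rank of $u$ from just after $uw$ is done until just after $uv$ is done. Since children are ordered latest leftmost and $v$ and $w$ are both in the group of children of $u$, the link $uw$ precedes $uv$ (or the reverse). Only the case with a positive shift matters, and we ignore negative $u$-shifts. Call the interval from the link $uw$ to the link $uv$ the \emph{u-shift interval} of $vw$.

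\textbf{Disjointness.} The key structural step is to show that, for a fixed pass index $i$, the $u$-shift intervals of the real pass-$i$ inner links whose winner and loser came from the same initial parent $u$ overlap only boundedly. The pass-$i$ inner links of a group $G$ pair up adjacent nodes of $G$. The nodes of $G$ at the time of pass $i$ are winners of earlier passes, and each descends from a distinct real link lost to $u$. Hence the intervals $[uw,uv]$ for the disjoint pairs $\{v,w\}$ of pass $i$ correspond to disjoint stretches of the sequence of links won by $u$ into this group. These are consecutive in $u$'s child order, so they are disjoint stretches of time while $G$ is growing.

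\textbf{Summing via reduced ranks.} Given disjointness, bound each non-negative $u$-shift by $u.r'_{uw}(b)-u.r'_{uv}(b)+b$ using Lemma~7.12 of~\cite{SinnamonT25}. Then apply the second half of Theorem~\ref{T:reduced-node-rank-changes} to the decreases in $b_i$-reduced rank of $u$ over these disjoint intervals. Add $\sum b_ik_i$ over all groups.

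\textbf{Choice of $b_i$ and the sum.} Use exactly the same choice $b_i=0$ for $i\le c$ and $b_i=2^{\lfloor (i-c-1)/2\rfloor}$ otherwise. Use the same estimate $\sum_i b_ik_i\le 24k_1/2^c+2$ per group, obtained from Lemma~\ref{L:group-pairing-passes}(iv) and the at-most-two tail passes with $k_i=1$. The summation over $i$ of the $8I/2^{b_i+c}$ terms, the $\lg((\lg n)/2^{b_i})$ per-decrease-key terms, and the per-delete-min terms then goes through verbatim as in Lemma~\ref{L:v-shifts}. This gives $\OO(\lg\lg n\cdot\lg\lg\lg n)$ per decrease-key, $\OO(\lg n)$ per delete-min (using Lemma~\ref{L:delete-time-shift}(ii) for the per-group constants), and $(8(c+2)+24)I/2^c$. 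Choosing $c$ large enough brings the last term below $I/288$.

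\textbf{Main obstacle.} I expect the main obstacle to be the disjointness step, not the arithmetic. Between $uw$ and $uv$, the node $u$ may win links of other groups, including outer links and links from nodes entering other groups, which also change its mass and rank. The point to check is that the mass of $u$ depends only on nodes to its right in its own group, excluding the rightmost. Its rank changes during the interval are therefore genuine rank decreases that are counted once per $i$ by Theorem~\ref{T:reduced-node-rank-changes}.

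I would first try to verify this by tracking, for fixed $i$, the descendant chain from each pass-$i$ pair back to $u$'s pass-$0$ children. If the intervals fail to be disjoint because of outer nodes at group boundaries, I would charge the at most two exceptional links per pass per group to Lemma~\ref{L:pairing-passes}. This adds $\OO(\lg n)$ per delete-min via Corollary~\ref{C:outer-link-rank-bound}, plus an extra $I/288$-type slack absorbed by enlarging $c$.
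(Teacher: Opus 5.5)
Your proposal follows the paper's proof. You define the $u$-shift interval between the times of $uv$ and $uw$, and you observe that for a fixed pass $i$ these intervals are disjoint for each node $u$. Your argument for this, via adjacent pairs in a subsequence of $u$'s children ordered latest leftmost, is the right justification. You then pass to $b_i$-reduced ranks via Lemma~7.12 of~\cite{SinnamonT25} and repeat the $b_i$ bookkeeping of Lemma~\ref{L:v-shifts}.

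One step fails as written. You bound the non-negative $u$-shifts using only the second half of Theorem~\ref{T:reduced-node-rank-changes}, which concerns rank decreases. But $u.r'_{uw}-u.r'_{uv}$ is a net decrease of $u$'s rank only when $uw$ precedes $uv$, that is, when $w$ lies to the right of $v$.

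For a left inner link $vw$, the loser $w$ is to the left of $v$. The group's nodes keep the order they had as children of $u$, so $uw$ was done after $uv$. A positive $u$-shift is then a net \emph{increase} of $u$'s reduced rank over $[t_{uv},t_{uw}]$, and the decrease half of the theorem says nothing about it. This is why the paper says both halves of the theorem must be used. The repair is immediate, since the theorem gives the same bound for increases. It only doubles the $8(c+2)I/2^c$ term, which is absorbed by taking $c$ larger.

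For the same reason, your ``main obstacle'' paragraph is misdirected. The changes in $u$'s rank inside an interval need not be decreases. Their causes (links into other groups, outer links, the group-restricted mass) need no separate check, because the theorem already counts every change in every node's reduced rank; only the disjointness of the intervals matters. Your fallback about outer nodes at group boundaries is also unnecessary. Inner links never involve a group's rightmost node, and the disjointness holds outright.
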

\begin{proof}
This lemma is the analogue for pure pairing heaps of Lemma 7.15 in~\cite{SinnamonT25}, and the proof is almost the same, except that we must account for groups.  Let $vw$ be a real inner link, and let $u$ be the initial parent of $v$ and $w$.  The $u$-shift of $vw$ is $u.r'_{uw} - u.r'_{uv}$.  This is at most the magnitude of the change in the rank of $u$ from the time $t_1$ just after the first of  $uw$ and $uv$ is done until the time $t_2$ just after the second of $uw$ and $uv$ is done.  We call the interval $[t_1, t_2]$ the \emph{u-shift interval} of $vw$.  For a given group $G$ and a given pass $i$, the $u$-shift intervals of the real inner pass-$i$ links of $G$ are disjoint.  
It follows that we can use Theorem~\ref{T:reduced-node-rank-changes} to bound the sum of the $u$-shifts in the same way we used it to bound the sum of the $v$-shifts in the proof of Lemma~\ref{L:v-shifts}, except that we must use both halves of the lemma, since both positive and negative changes in node ranks can contribute to $u$-shifts.
\end{proof}

\begin{theorem}\label{T:link-ranks}
In a sequence of pure pairing heap operations, the sum of the link ranks of all winner links is $\OO(\lg\lg n \cdot \lg\lg\lg n)$ per decrease-key plus $\OO(\lg n)$ per delete-min plus at most $I/72$.
\end{theorem}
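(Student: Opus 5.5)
We will prove Theorem~\ref{T:link-ranks} by summing the link-rank equality over the appropriate set of real inner links and then invoking Lemmas~\ref{L:w-shifts}, \ref{L:v-shifts}, and~\ref{L:u-shifts}.

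For each winner link $uv$, by definition there is a real inner pass-$2$ link $vw$ won by $v$ before $v$ leaves its group. The initial parent of the group containing $v$ and $w$ when $vw$ is done is $u$. We associate $uv$ with this $vw$. The association is injective, since $vw$ determines $u$ as the initial parent of its group and determines $v$ as its winner. Writing the link-rank equality for each associated $vw$, we get
\[uv.r = (\text{$w$-shift of } vw) + (\text{$v$-shift of } vw) + (\text{$u$-shift of } vw).\]
Summing over all winner links therefore bounds the sum of winner link ranks by the sum of the $w$-, $v$- and $u$-shifts over a subset $S$ of the real inner links.

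The step that needs care is passing from the subset $S$ to all real inner links, because the shift lemmas bound sums over \emph{all} real inner links and individual shifts may be negative. The left side is non-negative by Lemma~\ref{L:non-negative-link-ranks}, but that only controls the sum of the three shifts, not each one. For the $v$- and $u$-shifts there is no difficulty: the proofs of Lemmas~\ref{L:v-shifts} and~\ref{L:u-shifts} already bound the sum of the \emph{non-negative} shifts (negative ones are discarded, and for $u$-shifts the absolute rank change over the $u$-shift interval is bounded). Hence the sum over $S$ is at most the stated bounds.

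For the $w$-shifts we would redo the telescoping argument of Lemma~\ref{L:w-shifts}, restricted to $S$. The argument considers maximal runs $v_0w,\dots,v_kw$ of links lost by $w$ in which every link except $v_0w$ is a real inner link. Summing the $w$-shifts over the entire run telescopes to at most $v_k.r'_{v_kw}$. To pass to a subset, one bounds each $w$-shift above by its positive part. The cleanest fix, which we would try first, is to sum the link-rank equality over \emph{all} real inner links $vw$ rather than only over $S$. Links $vw$ whose predecessor $uv$ is not a winner link have $uv$ either real with non-negative link rank (Lemma~\ref{L:non-negative-link-ranks}) or else not real. For the non-real case we would choose the convention that the equality is omitted, or more simply check that the left side of the equality is still defined and non-negative for these cases. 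Then the total sum of winner link ranks is at most the total over all real inner links of the three shifts, and Lemmas~\ref{L:w-shifts}, \ref{L:v-shifts}, and~\ref{L:u-shifts} apply directly.

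Adding the three bounds gives $\OO(\lg\lg n\cdot\lg\lg\lg n)$ per decrease-key plus $\OO(\lg n)$ per delete-min plus $3I/288 = I/96 \le I/72$. This is the claimed bound. The main obstacle is the justification that extra terms on the left (non-winner $uv$) are non-negative, so that enlarging the summation set only increases the left side. Equivalently, we must confirm that every real inner link $vw$ has a corresponding real link $uv$ cut by the deletion of its group's initial parent, whose link rank is non-negative by Lemma~\ref{L:non-negative-link-ranks}.
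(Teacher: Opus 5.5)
Your final route is exactly the paper's argument: sum the link-rank equality over \emph{all} real inner links $vw$, so that each winner link $uv$ appears at least once on the left and every other left-hand term is a non-negative link rank by Lemma~\ref{L:non-negative-link-ranks}, then apply Lemmas~\ref{L:w-shifts}, \ref{L:v-shifts}, and~\ref{L:u-shifts} to get $3I/288\le I/72$. The check you leave open is immediate, and the paper simply asserts it: $v$ entered the group of its initial parent $u$ by losing a non-assembly link $uv$ to the temporary node $u$ (Rule~5), and only the delete-min of $u$ can have cut that link, since a decrease-key cutting it would have removed $v$ from the group (Rule~4); so $uv$ is a d-link between temporary nodes and hence real, and you should drop the ``omit the equality'' alternative, which would bring back the negative-$w$-shift problem you identified.
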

\begin{proof}
This theorem is the analogue for pure pairing heaps of Theorem 7.16 in~\cite{SinnamonT25}.  It is immediate from Lemmas~\ref{L:w-shifts},~\ref{L:v-shifts}, and~\ref{L:u-shifts}. 
\end{proof}

\section{Winner links}
\label{S:winner-links}
Now we have all the results we need to bound the number of winner links and hence the number of pass-$1$ inner links, completing the analysis.

\begin{theorem}\label{T:inner-link-bound}
In a sequence of pure pairing heap operations, the number of pass-$1$ inner links is $\OO(1)$ per insertion plus $\OO(\lg\lg n\cdot\lg\lg\lg n)$ per decrease-key plus $\OO(\lg n)$ per delete-min.  
\end{theorem}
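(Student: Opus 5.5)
By Corollary~\ref{C:many-winner-links}, if $W < I/3$ then $I$ is $\OO(1)$ per insertion plus $\OO(1)$ per decrease-key plus $\OO(\lg n)$ per delete-min, and we are done. So we assume $W \geq I/3$ and bound $W$ directly. The plan is to partition the winner links $uv$ into three classes according to the link $uv$ itself, which is a real inner pairing link:
\begin{itemize}
\item[] class (a): the winner or loser of $uv$ has rank at most $1$ just before the link;
\item[] class (b): both endpoints have rank greater than $1$ before the link, and the link rank of $uv$ is less than $1$;
\item[] class (c): the link rank of $uv$ is at least $1$.
\end{itemize}
We bound each class separately.

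For class (c), the count is at most the sum of the link ranks of all winner links, since link ranks are non-negative by Lemma~\ref{L:non-negative-link-ranks}. Theorem~\ref{T:link-ranks} then bounds this class by $\OO(\lg\lg n\cdot\lg\lg\lg n)$ per decrease-key plus $\OO(\lg n)$ per delete-min plus $I/72$.

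For class (b), Corollary~\ref{C:small-link-rank} says each such link decreases the rank of $u$ or $v$ by at least $1$. So the number of class-(b) links is at most the total magnitude of node-rank decreases. Theorem~\ref{T:node-rank-changes} bounds that by $\OO(\lg\lg n)$ per decrease-key plus $\OO(\lg n)$ per delete-min plus $I/72$.

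For class (a), I would use Lemma~\ref{L:small-rank-nodes}, but it is stated for a fixed pass $i$, so I cannot simply sum it over all passes. Instead I would split class (a) by pass number. For passes $i \leq C$, with $C$ a constant to be chosen, Lemma~\ref{L:small-rank-nodes} gives $\OO(C\lg n)$ per delete-min. For passes $i > C$, I would bound all pass-$i$ inner links directly by Lemma~\ref{L:group-pairing-passes}(iv). Summed over groups, these give at most $\sum_{i>C}(k_1/2^{i-1}+1)$ per group, and only $\OO(\lg N_G)$ passes per group actually occur (Lemma~\ref{L:group-pairing-passes}(vii)). The total is therefore $I/2^{C-1}$ plus $\OO(\lg n)$ per delete-min, using Lemma~\ref{L:delete-time-shift}(ii). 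Choosing $C$ so that $2^{C-1}\geq 72$ makes this term at most $I/72$.

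Adding the three classes gives $W \leq 3I/72 + X = I/24 + X$, where $X$ is $\OO(1)$ per insertion plus $\OO(\lg\lg n\cdot\lg\lg\lg n)$ per decrease-key plus $\OO(\lg n)$ per delete-min. Combining this with $W \geq I/3$ gives $I/3 - I/24 \leq X$, so $I \leq (24/7)X$, which is the theorem.

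The main obstacle is class (a): Lemma~\ref{L:small-rank-nodes} is per pass, so the constant-pass cutoff combined with the geometric decay of $k_i$ is needed to keep the leftover coefficient of $I$ below $1/3$. I also need to check that the bound of Theorem~\ref{T:link-ranks}, which sums link-rank equalities over real inner links, really covers each winner link at least once. The paper asserts this: each winner link $uv$ determines a pass-$2$ real inner link $vw$ in the group with initial parent $u$.
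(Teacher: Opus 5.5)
Your argument is essentially the paper's: the same three-way split of winner links (an endpoint of rank at most $1$, link rank at least $1$, or link rank below $1$ with both endpoint ranks above $1$), handled by the same lemmas. The only difference is that the paper truncates all winner links to passes $1$--$3$ up front (inner links of pass at least $4$ number at most $I/4$ plus $4$ per delete-min) and then argues by cases, whereas you truncate only the small-rank class at a constant $C$ and close with a single union bound.

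One citation needs repair. Lemma~\ref{L:group-pairing-passes}(vii) bounds the passes of $G$ by $\lg(n_1-2)+3$, and $n_1$ counts permanent nodes too. A group can hold many permanent nodes and few temporary ones, so this does not yield $\OO(\lg N_G)$ passes per group. Use Lemma~\ref{L:pairing-passes} instead; its extra $\OO(1)$ per insertion and per decrease-key is absorbed in $X$. Alternatively, skip pass counting: $G$ has at most $k_C+2$ nodes after pass $C$, so its inner links of pass greater than $C$ number at most $k_C+1\le k_1/2^{C-1}+2$.
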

\begin{proof}
By Corollary~\ref{C:many-winner-links}, the theorem holds if $W < I/3$.  Thus assume $W \geq I/3$. 

Given a group $G$, let $k_i$ be the number of inner pass-$i$ links of $G$.  By Lemma~\ref{L:group-pairing-passes}(iii), $k_3 \leq k_2/2 + 1/2\leq k_1/4 + 1/4+ 1/2 = k_1/4+3/4$.  Since $G$ contains at most $k_3+2$ nodes after pass $3$, the total number of pass-$i$ inner links of $G$ with $i\geq 4$ is at most $k_3+1\leq k_1/4+2$. Summing over groups, the total number of pass-$i$ inner links with $i\geq 4$ is at most $I/4$ plus $4$ per delete-min.  Given that $W \geq I/3$, this implies that at least $I/12$ inner links minus $4$ per delete-min are winner links of pass $1$, $2$, or $3$.

Suppose that at least $I/36$ winner links minus $4$ per delete-min have a link rank of at least $1$.  By Theorem~\ref{T:link-ranks}, the number of winner links with link rank at least $1$ is $\OO(\lg\lg n \cdot \lg\lg\lg n)$ per decrease-key plus $\OO(\lg n)$ per delete-min plus at most $I/72$.  Since $1/36> 1/72$, $I$ is $\OO(\lg\lg n \cdot \lg\lg\lg n)$ per decrease-key plus $\OO(\lg n)$ per delete-min; that is, the theorem holds in this case.

Suppose, on the other hand, that fewer than $I/36$ winner links minus $4$ per delete-min have a link rank at least $1$.  Then at least $I/12-I/36=I/18$ winner links have link rank at most $1$ and are of pass $1$, $2$, or $3$.

Let $vw$ be such a link.  We call $vw$ a \emph{type-1 link} if $v$ or $w$ has rank less than $1$ before the link, a \emph{type-2 link} otherwise.  By Lemma~\ref{L:small-rank-nodes}, the number of type-$1$ links is at most $(45/2)\lg n + 57/2$ per delete-min.  Thus the theorem holds if the number of type-$1$ links is at least $I/36$.

The remaining possibility is that at least $I/36$ of the winner links of pass 1, 2, or 3 with link rank less than $1$ are type $2$.  Let $vw$ be a type-$2$ link.  Then both $v$ and $w$ have rank at least $1$ just before the link.  By Corollary~\ref{C:small-link-rank}, doing link $vw$ reduces the rank of $v$ or $w$ or both by at least $1$.  By Theorem~\ref{T:node-rank-changes}, the sum of the node-rank decreases is $\OO(\lg\lg n)$ per decrease-key plus $\OO(\lg n)$ per delete-min plus $I/72$.  Since $1/36 > 1/72$, $I$ is $\OO(\lg\lg n \cdot \lg\lg\lg n)$ per decrease-key plus $\OO(\lg n)$ per delete-min; that is, the theorem holds in this case as well.
\end{proof}

\begin{theorem}\label{T:pure-pairing-heap-bounds}
In a sequence of pure pairing heap operations, each operation except delete-min takes $\OO(1)$ time worst-case; each delete-min takes $\OO(1)$ %\xxyedit{per operation} 
time plus $\OO(1)$ time per pairing link worst-case.  In such a sequence, the amortized number of pairing links done is $\OO(\lg\lg n\cdot\lg\lg\lg n)$ per decrease-key, $\OO(\lg n)$ per delete-min, and $\OO(1)$ per insertion.
\end{theorem}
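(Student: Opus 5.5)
The worst-case part of the statement follows directly from the implementation described in Section~\ref{S:versions-of-pairing-heaps}. Make-heap, find-min, insert and meld each do at most one link, and decrease-key does one cut and at most one link. With the three-pointer representation each link or cut is $\OO(1)$, so these operations are $\OO(1)$ worst-case. A delete-min does the pairing pass, spending $\OO(1)$ per pairing link and tracking the minimum remaining root. It then performs the assembly phase by a single list catenation in $\OO(1)$ time. Hence each delete-min takes $\OO(1)$ time plus $\OO(1)$ per pairing link. If instead the assembly links are counted explicitly, Lemma~\ref{L:assembly-links} shows they are at most the number of pairing links, so the same bound holds.

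For the amortized bound, I would split the pairing links into outer and inner links, as in Section~\ref{S:pairing-links}. Theorem~\ref{T:outer-link-bound} already bounds the outer links by $\OO(1)$ per insertion, $\OO(1)$ per decrease-key, and $6\lg n$ per delete-min.

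For the inner links, I would sum Lemma~\ref{L:group-pairing-passes}(v) over all groups. This gives at most $2I$ plus one per group. Each group has a temporary initial parent and each temporary node creates at most two groups, so the extra term is at most two per temporary node. A temporary node is inserted once and deleted once, so this is $\OO(1)$ per insertion, or equally $\OO(1)$ per delete-min. Theorem~\ref{T:inner-link-bound} then bounds $I$ by $\OO(1)$ per insertion, $\OO(\lg\lg n\cdot\lg\lg\lg n)$ per decrease-key, and $\OO(\lg n)$ per delete-min. Adding the inner and outer bounds gives the stated amortized counts.

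The only point needing care is bookkeeping. Every pairing link must fall in exactly one class, outer or inner. Each inner link is of some group, because its loser lies in a group. The point I expect to need checking is whether pairing links involving groupless roots are covered. Groupless roots can arise from children of a permanent deleted node, which cannot happen since deleted nodes are temporary, or from nodes removed from groups. By the group rules, every child of a temporary node that lost a non-assembly link to it is in a group. Groupless children arise only as losers of assembly links to permanent nodes or as children of permanent nodes, and a temporary deleted root's children are then mostly grouped. I would handle any residual groupless roots during a pairing pass by charging each such link to the f-link or k-link that made the node groupless, which costs $\OO(1)$ per insertion or per decrease-key. With that charge, the final sum is immediate.
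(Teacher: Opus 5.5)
Your proposal is correct and is essentially the paper's proof. The worst-case part is the observation at the start of Section~\ref{S:terminology} together with Lemma~\ref{L:assembly-links}, and the amortized part combines Theorem~\ref{T:outer-link-bound}, Lemma~\ref{L:group-pairing-passes}(v) (whose one-per-group term is $\OO(1)$ per insertion, as you note), and Theorem~\ref{T:inner-link-bound}.

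Your worry about groupless roots is moot. By induction over the operations, Rules 2 and 5 place every child of a temporary node in a group, so every root in a pairing pass is grouped. This matters, because your proposed charging to the f-link or k-link that made a node groupless would not work if such roots existed: a single link could be charged many times.
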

\begin{proof}
The first part of the theorem is merely a restatement of the observation at the beginning of Section~\ref{S:terminology}.  The second part, our main result, is immediate from Theorem~\ref{T:outer-link-bound}, Lemma~\ref{L:group-pairing-passes}~(v), and \Cref{T:inner-link-bound}.  
\end{proof}

\section{Lazy pairing heaps}\label{S:lazy-pairing-heaps}

Our analysis applies almost without change to the lazy pairing heap described in Section~\ref{S:versions-of-pairing-heaps}.  The only significant change is in the definition of node groups. The list of roots representing a heap is partitioned into node groups, as are the children of each temporary node.  The leftmost group of roots in a heap and the leftmost group of children of a temporary node are growing; all other node groups are shrinking.  The only possibly empty groups are the growing groups.  When a heap is created, an empty growing group of its roots is created; when a temporary node is inserted, an empty leftmost growing group of its children is created.  When a meld occurs, the leftmost group of roots of the heap whose min-root is the min-root of the combined heap becomes the leftmost group of roots of the combined heap; the leftmost group of roots of the other heap becomes a shrinking group, or is deleted if it is empty.  When a delete-min occurs, the leftmost group of children of the deleted node becomes the leftmost group of roots; the previous leftmost group of roots, if any, becomes a shrinking group.  When a decrease-key of a node $v$ other than the min-root $u$ occurs, $u$ is deleted from its group, and $v$ is added to the leftmost group of roots, which may or may not be the one containing $u$ before the decrease-key.  The remaining details of the analysis are straightforward and mimic those of pure pairing heaps.  As mentioned at the end of Section~\ref{S:versions-of-pairing-heaps}, this analysis, appropriately modified, also applies to other versions of lazy pairing heaps, such as those in~\cite{FSST86, SODASmoothSlim}.

\section{Remarks}\label{S:remarks}

As one can easily verify, our analysis implies that the amortized and worst-case time to do decrease-key of a permanent node is $\OO(1)$: Every link won or lost by a permanent node is a phantom link, and such links do not change the rank of any node or link.  This is a general result that holds for multipass pairing heaps and slim and smooth heaps as well, by the analysis of~\cite{SinnamonT25}.

Our idea of partitioning nodes into groups has at least two additional applications. It applies to standard pairing heaps to give an improved amortized bound of $\OO((\log\log n)^2\log\log\log n)$ per decrease-key.  We shall present this result in a forthcoming companion paper.  It also is a key idea in proving that splay trees, a form of self-adjusting search tree, are almost dynamically optimal~\cite{Splay2026}.

There is still a gap of $\OO(\log\log\log n)$ between the upper and lower bounds for decrease-key in both multipass and pure pairing heaps, and at least this much in standard pairing heaps.  Reducing this gap remains a challenging open problem.

\paragraph{Declaration on AI Tool Use}
The authors used ChatGPT as a post-drafting review aid to flag possible calculation errors, notation inconsistencies, and gaps in reasoning. Any resulting revisions were written by the authors. The authors assume responsibility for all content.

\bibliographystyle{plainnat}
\bibliography{main}

\end{document}